\DeclareMathOperator{\tr}{Tr}
\DeclareMathOperator{\iSWAP}{iSWAP}
\DeclareMathOperator{\XX}{XX}
\DeclareMathOperator{\XY}{XY}
\DeclareMathOperator{\YY}{YY}
\DeclareMathOperator{\Pf}{Pf}
\DeclareMathOperator{\SO}{SO}
\DeclareMathOperator{\OR}{O}
\DeclareMathOperator{\Spin}{Spin}
\DeclareMathOperator{\Pin}{Pin}
\newcommand{\so}{\mathfrak{so}}
\newcommand{\id}{\mathbb{1}}
\newcommand{\dens}[1]{\lvert#1\rangle\!\langle#1\rvert}
\newcommand{\ket}[1]{\lvert#1\rangle}
\newcommand{\bra}[1]{\langle#1\rvert}
\newcommand{\mbb}[1]{\mathbb{#1}}
\newcommand{\mc}[1]{\mathcal{#1}}
\newcommand{\md}[1]{\mathds{#1}}
\newcommand{\ct}{^\dagger}
\newcommand{\tn}[1]{^{\otimes #1}}
\newcommand{\vsp}{\vphantom{{{\sum^i}^I}^I}}
\newcommand{\ot}{\otimes}
\newtheorem{theorem}{Theorem}
\newtheorem*{theoremone}{Theorem~\ref{thm:var_bound}}
\newtheorem{lemma}[theorem]{Lemma}
\begin{document}
\title{\LARGE Matchgate benchmarking: Scalable benchmarking of a continuous family of many-qubit gates}
\author{Jonas Helsen}
\affiliation{QuSoft \& Korteweg-de Vries Institute for Mathematics, University of Amsterdam, Science Park 123, 1098 XG Amsterdam, The Netherlands}
\affiliation{Centrum Wiskunde \& Informatica (CWI), Science Park 123, 1098 XG Amsterdam, The Netherlands}
\author{Sepehr Nezami}
\affiliation{Institute for Quantum Information and Matter, Caltech, Pasadena, CA 91125, USA}
\author{Matthew Reagor}
\affiliation{Rigetti Computing, 775 Heinz Ave, Berkeley, CA 94710, USA}
\author{Michael Walter}
\affiliation{QuSoft \& Korteweg-de Vries Institute for Mathematics, University of Amsterdam, Science Park 123, 1098 XG Amsterdam, The Netherlands}
\affiliation{Institute for Theoretical Physics \& ILLC, University of Amsterdam, Science Park 123, 1098 XG Amsterdam, The Netherlands}
\affiliation{Faculty of Computer Science, Ruhr University Bochum, Universit\"atsstra\ss{}e 150, 44801 Bochum, Germany}
\hypersetup{pdfauthor={Jonas Helsen, Sepehr Nezami, Matt Reagor, Michael Walter},pdftitle = {Matchgate benchmarking: Scalable benchmarking of a continuous family of many-qubit gates}}
\begin{abstract}
We propose a method to reliably and efficiently extract the fidelity of many-qubit quantum circuits composed of continuously parametrized two-qubit gates called matchgates.
This method, which we call \emph{matchgate benchmarking}, relies on advanced techniques from randomized benchmarking as well as insights from the representation theory of matchgate circuits.
We argue the formal correctness and scalability of the protocol, and moreover deploy it to estimate the performance of matchgate circuits generated by two-qubit XY spin interactions on a quantum processor.
% \JH{499/600 characters}
\end{abstract}
\maketitle

Quantum computers promise a revolution in computational power, and a multinational effort is underway to construct them.
One of the key challenges in the building and operating of quantum computers is the appearance of errors in computations, either due to inaccuracies in control or due to interactions with the environment.
% These errors can be overcome using techniques from quantum error correction, but only if the error rate is low enough.
It is thus vitally important to be able to characterize accurately and efficiently the type and magnitude of errors present in quantum operations.
To this end, a variety of techniques have been developed, with the most popular class of techniques known as randomized benchmarking (RB)~\cite{hashagen2018real,helsen2019new,cross2016scalable,carignan2015characterizing,wallman2015robust,PhysRevA.90.030303,GambettaEtAl:2012:simultaneousRB,knill2008randomized,francca2021efficient}, where one characterizes the quality of gates in a gateset by applying random sequences of gates of increasing length, and tracks the corresponding increase in average error.
For a recent overview of RB protocols see~\cite{framework} and references therein.

Randomized benchmarking has been extremely successful in characterizing quantum operations on a variety of platforms~\cite{xue2019benchmarking,PhysRevA.90.030303,CycleBenchmarking}.
Yet it suffers from a number of shortcomings that limit its usefulness in some important situations.
Firstly, standard RB protocols mix the error contributions of various types of gates (such as single qubit gates and two qubit gates) and only report an average error.
This is problematic because different types of gates are created by different physical mechanisms and hence have different error contributions.
Moreover, different types of gates contribute differently to the error thresholds that must be met for fault-tolerance.
For instance it is often the case that more stringent requirements are imposed on two-qubit gates than on single-qubit gates.
Secondly, standard randomized benchmarking protocols only test discrete gatesets (such as the Clifford group), while continuously parametrized gatesets are vital for near-term quantum computing applications such as VQE and QAOA~\cite{mcclean2016theory,farhi2014quantum}.
For these reasons it is desirable to devise gate assessment procedures that combine the proven advantages of randomized benchmarking with the ability to handle continuous gate families and focus on a single type of quantum gates.

In this paper, we address this challenge by proposing \emph{matchgate benchmarking}, an advanced randomized benchmarking procedure based on the general framework given in~\cite{framework} as well as the recently introduced linear cross-entropy benchmarking~\cite{arute2019quantum}.
Our procedure natively uses continuously parametrized two-qubit gates and estimates fidelities in a scalable way, both in terms of statistical sampling and classical computational resources.
To prove its correctness and scalability we use techniques from the representation theory of the matchgate group.
Moreover we provide an implementation of the protocol on a small quantum computer, showing that our protocol can reliably assess the quality of two qubit gates in a realistic environment.

Because this paper presents both proposals for experimental procedures and mathematical results, and thus is aimed at both experimental practitioners and theorists, we defer all technical proofs to appendices.

%=============================================================================
\section{Matchgates}
%=============================================================================
Matchgate circuits are a continuous class of quantum circuits originally conceived by Valiant~\cite{valiant2002expressiveness} and subsequently connected to the theory of free fermions by Knill~\cite{knill2001fermionic} and Terhal-DiVincenzo~\cite{terhal2002classical}, see also~\cite{divincenzo2005fermionic,bravyi2005lagrangian,jozsa2008matchgates}.
They are explicitly realized by~$\XY$ or by~$\XX$ and~$\YY$ spin interactions and are thus the natural choice for two-qubit gates on many physical platforms such as ion traps.
The standard~$\iSWAP$~\cite{schuch2003natural} and~$\XY(\theta)$ gates~\cite{abrams2019implementation} are examples of matchgates (though they do not generate the full group by themselves).
A key property of matchgate circuits is that they are efficiently classically simulable (like the better known Clifford group often used in standard RB), which is a necessary requirement for a scalable randomized benchmarking procedure.
The two-qubit matchgates are generated by unitaries of the form~$U(\alpha) = \exp(i \alpha P\otimes P')$, where~$P,P'$ are Pauli $X$ or~$Y$ matrices.
The $n$-qubit \emph{matchgate group}~$\mathcal{M}_n$ is then defined by considering $n$ qubits on a line and composing \emph{nearest-neighbor} gates of this form, along with single-qubit $Z(\theta) = \exp(i \theta Z)$ gates. We further extend this group with a Pauli $X$ on the \emph{last} qubit. This forms the $\mathcal{M}^+_n = \langle \mc{M}_n, X_n\rangle$ group, which we will refer to as \emph{generalized matchgates}.

Matchgates are intimately connected to non-in\-ter\-act\-ing fermions. This connection is key to their efficient simulation.
To see this, consider the Majorana fermion operators (represented on qubits by the Jordan-Wigner isomorphism)
\begin{align*}
  \gamma_{2j-1} &= Z_1 \ldots Z_{j-1}X_j I_{j+1}\ldots I_n, \\
  \gamma_{2j} &= Z_1 \ldots Z_{j-1}Y_j I_{j+1}\ldots I_n,
\end{align*}
with $j \in [n] :=\{1,\dots,n\}$ and $I,X,Y,Z$ the Pauli matrices, the subscript indicating on which qubit they act.
Any matchgate unitary $U\in \mc{M}_n$ acts by conjugation on the Majorana operators as
\begin{equation}\label{eq:match_to_orth}
  U\gamma_jU\ct = \sum_{j\in [2n]}R_{kj}\gamma_k,
\end{equation}
with $R\in \SO(2n)$ a rotation matrix. Moreover the $X_n$ gate maps all $\gamma_i$ for $i<2n$ to themselves but maps~$\gamma_{2n}$ to~$-\gamma_{2n}$, so it corresponds to a reflection $F$ of the $2n$-th axis.
In this way, matchgate and generalized matchgates unitaries can be efficiently tracked on a classical computer.
Moreover, for any rotation there is a matchgate unitary $U=U(R)$ implementing it, and for every element $Q$ of $\OR(2n)$ (rotations plus reflections) there is a corresponding generalized matchgate unitary $U=U(Q)$. In fact, $\mc{M}_n$ is generated as a Lie group by Hamiltonians of the form $H = \frac i4 \sum_{j,k\in [2n], j\neq k} \alpha_{jk} \gamma_j\gamma_k$, with $\alpha$ a real antisymmetric $2n\times 2n$ matrix, so $\mc M_n$ can be understood as a representation of $\Spin(2n)$ and $\mc{M}_n^+$ a representation of $\Pin(2n)$.
Since the Majorana operators and their products span the space of $n$-qubit operators, (generalized) matchgate unitaries are fully determined by the corresponding rotation (and reflection) matrix (up to an overall phase).

Before we can define our benchmarking procedure we need to briefly discuss the action of the generalized matchgate group on the space of $n$-qubit operators.
We denote products of Majoranas as $\gamma[S] =\prod_{s \in S} \gamma_s$ for $S\subseteq[2n]$, with $\gamma_\emptyset = I$ and the product taken in increasing order.
For each $k\in[2n]$, consider the subspace $\Gamma_k = \langle \gamma[S] \;|\; S\subseteq [2n],\; \lvert S\rvert=k\rangle$ spanned by products of $k$ Majorana operators.
Then, for $k\in [2n]$ each $\Gamma_k$ is an irreducible representation of the generalized matchgate group $\mc{M}_n^+$. Moreover all these representations are inequivalent. A proof of this statement is given in \cref{lem:rep_th}.

We note that the addition of the extra bit flip gate, lifting the matchgates to the generalized matchgates, is critical in ensuring the mutual inequivalence of the representations $\Gamma_k$.

%=============================================================================
\section{Matchgate benchmarking}
%=============================================================================
The matchgate benchmarking protocol, given formally in \cref{prot:protocol}, estimates the quality of generic circuits in the generalized matchgate group $\mc{M}_n^+$ in a manner that scales efficiently with the number of qubits and is resistant to state preparation and measurement (SPAM) errors.
The output of matchgate benchmarking is a list of decay parameters $\lambda_{k}$ that characterize the noise associated to the subspace~$\Gamma_k$.
We will call these decay parameters \emph{Majorana fidelities}.

The protocol consists of multiple rounds with varying parameters~$k \in [2n]$ and sequence lengths~$m$.
Each round starts with the preparation of either the all-zero $\ket{0_n}:= \ket{0}\tn{n}$ or the all-plus $\ket{+_n}:=\ket{+}\tn{n}$ state.
This is followed by~$m$ generalized matchgate unitaries $U(Q_1),\dots,U(Q_m)$, chosen uniformly and independently at random from $\mc{M}_n^{+}$ (we describe an efficient method for sampling below).
Finally, all qubits are measured in either the computational ($Z$) basis or the Hadamard ($X$) basis.
We write~$\rho_0$ for the initial state and~$\{E_x\}_{x\in\{0,1\}^n}$ for the measurement POVM, and refer to the two SPAM settings as $X$ and $Z$-basis SPAM.
The above is repeated many times until the relative frequencies~$f_x$ of the measurement outcomes $x\in\{0,1\}^n$ give a good estimate of the true probabilities, which we denote by~$p(x|Q,m)$, where $Q = Q_m\cdots Q_1$.
Since $Q$ itself is uniformly random, by averaging over many random sequences we thus obtain a good estimate $\hat f_k(m)$ of the weighted average
\begin{align}\label{eq:av_data}
  f_k(m) = \int_{\OR(2n)} dQ \sum_{x\in \{0,1\}^n} \, \alpha_k(x,Q) p(x|Q,m),
\end{align}
where we use the \emph{correlation function} $\alpha_k$ defined by
\begin{align}\label{eq:corr_func}
  \alpha_k(x,Q) = \frac1{N_k} \! \tr\Bigl(E_x P_k\bigl( U(Q)\rho_0 U(Q)\ct\bigr) \Bigr).
\end{align}
Here $P_k$ denotes the projection superoperator onto the subspace~$\Gamma_k$, and the normalization constant~$N_k = 2^{-n} \binom{n}{\lfloor k/2\rfloor}^2 \binom{2n}{k}^{-1}$ is chosen so that~$f_k(m) = 1$ if the gates are perfectly implemented.
The correlation functions~$\alpha_k$ can be efficiently computed using the classical simulation techniques for matchgates~\cite{terhal2002classical,knill2001fermionic} as well as several tricks for evaluating Pfaffian sums~\cite{ishikawa2006applications}.
We give explicit expressions in \cref{appsec:corr_func}.

%=============================================================================
\section{Interpretation and analysis}
%=============================================================================
Intuitively the randomization over gates $U(Q_1), \ldots, U(Q_m)$ averages out the noise associated to each gate, resulting in a linear combination of generalized depolarizing channels, one for each irreducible subspace~$\Gamma_k$.
This observation forms the basis of the randomized benchmarking approach, of which matchgate benchmarking is an example~\footnote{For comparison: In standard RB with the Clifford group there are two subrepresentations, giving rise to a standard depolarizing channel upon averaging.}.
In our setting, the associated depolarization parameters are described by what we call the \emph{Majorana fidelities} $\lambda_k$, since the corresponding subspaces $\Gamma_k$ are spanned precisely by the $k$-fold product of the Majorana operators.
The role of the correlation functions~$\alpha_k$ is precisely to address the individual subspaces $\Gamma_k$ and thus isolate the corresponding Majorana fidelities.
In particular $f_k(m)$ will be determined precisely by the $m$-th power $\lambda_k^m$.
To make this more concrete, suppose that each generalized matchgate unitary $U(Q)$ is realized by a quantum channel $\Phi(Q)$ that describes its actual implementation.%
\footnote{The existence of such a map~$\Phi$ is an assumption on the underlying device and in particular excludes time- and context-dependent effects. It is, however, the weakest assumption under which RB protocols can be guaranteed to function correctly~\cite{framework}.}
If we assume \emph{gate-independent noise}, i.e., that there is a quantum channel $\Lambda$ such that $\Phi(Q)(\rho) = \Lambda(U(Q) \rho U(Q)\ct)$ for all $Q \in \OR(2n)$, then the average~$f_k(m)$ is described \emph{exactly} by a single exponential decay
\begin{equation*}%\label{eq:mexp decay}
  f_k(m) = A_k \lambda_k^m.
\end{equation*}
Importantly, $\lambda_k$ depends only on the noise channel~$\Lambda$, while $A_k$ also depends on the SPAM. A proof of of the above statement is given in \cref{appsec:gate_indep}
The assumption of gate-independent noise is unrealistic, but it can be relaxed significantly using Fourier analytic techniques~\cite{framework}. We outline an approach to this in \cref{appsec:gate_dep} but leave a detailed derivation for future work.

In the absence of noise, the Majorana fidelities are equal to one, so their deviation from the identity encodes properties of the noise.
In particular, the average gate fidelity can (for gate-independent noise) be recovered by
\begin{equation}\label{eq:fid}
% 2^{-n} \sum_{k=0}^n \dim \Gamma_{k} \tr(M_k) =
2^{-n} \sum_{k=0}^{2n} \!\binom{2n}{k} \lambda_{k}
= \! (2^n \!+\!1)F_{\mathrm{avg}}(\Lambda) - 1.
\end{equation}
% Moreover, for specific values of $k$ the Majorana fidelities %eigenvalues of the matrices~$M_k$
% have an independent operational interpretation.
% For instance, $\lambda_{0}$ and $\lambda_{2n}$ encode properties of leakage and parity violation of the noise respectively. This can be seen by noting that the representations $\Gamma_0$ and $\Gamma_{2n}$ are spanned by the identity matrix and the parity operator $Z_1\ldots Z_n$ respectively. Thus if the noise channel is trace and parity preserving, both will be equal to one.
Interpreting the individual values $\lambda_k$ operationally is less straightforward.
The parameter $\lambda_0$ has a well-known interpretation as a measure of trace-loss of the channel~$\Lambda$.
Moreover, if $\Lambda$ is unitary (i.e.
generated by some Hamiltonian $H$) then the parameter $\lambda_{2n}$ can be seen as a measure of parity preservation of this evolution.
More generally we can interpret the parameters $\lambda_k$ if we restrict the channel $\Lambda$ to be Gaussian (as discussed in \cite{bravyi2005lagrangian}) and unital.
In this case the channel $\Lambda$ has associated to it a $2n\times 2n$ real matrix $B$ s.t.\ $BB^T\leq \id$ and the action on a Majorana operator $\gamma[S]$ is defined as $\Lambda(\gamma[S]) = \sum_{S'\subset [2n]\,,|S'|=|S|} \det(B[S',S])\gamma[S']$, where $B[S',S]$ denotes the submatrix of $B$ with rows in $S'$ and columns in $S$.
Hence the decay parameter $\lambda_k$ is in this case precisely given by the average over the principal minors of size $k$.
It is easy to see that these averages over principal minors precisely correspond to (normalized) elementary symmetric polynomials of the eigenvalues of $B$.
Hence the Majorana fidelities encode eigenvalue information of Gaussian noise.
These eigenvalues can then be in principle extracted by solving the system of $2n$ polynomials in $2n$ unknowns (although one would have to work out the stability of solutions under statistical noise, which we do not attempt here).

Given the functional form of $f_k(m)$, we can extract the value of $\lambda_k$ by fitting to a single  exponential decay.
To perform this fit in practice, a correct choice of SPAM operators $\rho_0, \{E_x\}$ is required that ensures that the pre-factors $A_k$ are large enough.
In the noise-free limit the parameters $A_k$ can be explicitly computed (which we do in \cref{appsec:gate_indep}).
For odd~$k$ (and $X$-basis SPAM), we find that $A_k =1$, and thus the fitting problem is well conditioned.
For even~$k$ (and $Z$-basis SPAM), we similarly find that $A_k=1$.
% It can thus happen in principle that $M_k$ has eigenvalues that are not visible in the data $f_k(m)$.
% However, given reasonably high quality gates, the matrix $M_k$ will be close to the identity and thus this issue is not relevant in practice.
% and $A_k = \begin{psmallmatrix}1 & 0 \\ 0 & 0\end{psmallmatrix}$ which is a rank one matrix. Moreover we will see than $M_n$ is diagonal in the same basis so the $\lambda_{n,1}$ eigenvalue wil not be detected
We also note that if we choose $Z$-basis SPAM and consider odd $k$, we have $A_k=0$, giving no visibility. Similarly we have $A_{2n}=0$ for $X$-basis SPAM.
This motivates our use of different SPAM settings for different values of $k$.

\begin{figure}
\begin{algorithm}[H]
\caption{Matchgate benchmarking}\label{prot:protocol}
\algrenewcommand\algorithmicindent{0em}
\begin{algorithmic}[1]
\For{$k \in \{0,\dots,2n\}$}
\For{$m \in \text{sequence lengths}$} \vskip 0.2em
\algrenewcommand\algorithmicindent{1em}
\For{$i \in [K]$}
  \State Prepare $\ket{0}\tn{n}$ ($k$ even) or $\ket{+}\tn{n}$ ($k$ odd).
  \For{$j \in [m]$}
  \State Apply $Q_j^{(i)}\in \OR(2n)$ chosen uniformly at random.
  \EndFor.
  \State Measure in the $Z$ ($k$ even) or $X$ basis ($k$ odd).
  \State Repeat the above many times and record frequencies $f^{(i)}_x$ of measurement outcomes $x\in\{0,1\}^n$.
  % \State Compute correlation functions $\alpha^{(i)}_x = \alpha_k(x,R_m^{(i)} \cdots R_1^{(i)})$.
\EndFor \vskip 0.2em
\State Compute the empirical average \vskip -1.5em
\begin{align}\label{eq:av_emp}
  \!\!\!\!\hat{f}_k(m) = \frac{1}{K}\sum_{i=1}^K \sum_{x\in \{0,1\}^n} \alpha_k(x, Q_m^{(i)} \cdots Q_1^{(i)}) f^{(i)}_x.
\end{align}
\EndFor
\State Fit $\{\hat{f}_k(m)\}_m$ to $\hat{f}_k(m) =_{\mathrm{fit}} A_{k}\lambda^m_{k}$.
\EndFor
\State Output the Majorana fidelities $\{\lambda_{k}\}_{k=0}^{2n}$.
\end{algorithmic}
\end{algorithm}
\end{figure}

%=============================================================================
\section{Generating random matchgate circuits}
%=============================================================================
Random rotations in $\OR(2n)$ (and thus generalized matchgate unitaries) can readily be sampled in an efficient manner by a variety of methods.
However it is desirable to generate them directly in the form of circuits involving one- and two-qubit matchgates. Begin by noting that we can decompose any generalized matchgate $U(Q)$ as $ U(Q) = U(R)X_n^b$ with $b\in \{0,1\}$ where $U(R)$ (with $R\in \SO(2n)$) is a matchgate and $X_n$ is a bit-flip on the last qubit. Hence the task of sampling random generalized matchgates reduces to that of sampling random matchgates. For this we give a method based on the probabilistic Hurwitz lemma~\cite{diaconis2000bounds}.
Consider the rotation
\begin{align}\label{eq:hurwitz}
  R =
  (G^{(1)}_{2n\!-\!1}\!\cdots\! G^{(1)}_1)
  (G^{(2)}_{2n\!-\!1}\!\cdots\! G^{(2)}_2)
  \!\cdots\!
  % (G^{(2n\!-\!2)}_{2n\!-\!1} G^{(2n\!-\!2)}_{2n\!-\!2})
  G^{(2n\!-\!1)}_{2n\!-\!1},
\end{align}
where each $G^{(i)}_j$ is a two-dimensional rotation by a random angle in the $j,j+1$-plane.
Proposition~2.1 in~\cite{diaconis2000bounds} implies that $R$ is a uniformly (Haar) random rotation in $\SO(2n)$.
The formula for $R$ translates directly into a circuit for the corresponding matchgate unitary $U(R)$ that only involves single qubit $Z_j$ and two-qubit $X_j X_{j+1}$ rotations.
Since $Z$-rotations are virtual~\cite{mckay2017efficient,xue2019benchmarking}, and hence noiseless, in many platforms, the dominant source of noise in this circuit is from the two qubit gates. Adding the aforementioned random $X_n$ gates we obtain a uniformly random circuit $U(Q)\in \mc{M}_n^{+}$ generalized matchgate circuit $U(Q)$ given above is comparable to that of a generic $n$-qubit Clifford gate~\cite{koenig2014efficiently}.

Finally, if one only has access to $\XY$ gates (corresponding to $\XX+\YY$ rotations, which do not by themselves generate the full matchgate group) as opposed to $\XX$ or $\YY$ rotations, the above construction can be implemented by the identity $\XX(\theta) = \XY(\theta/2) \, X_1 \, \XY(\theta/2) \, X_1$, where~$X_1$ denotes the bit-flip gate on the first qubit.

We conjecture that one can also efficiently sample \emph{approximately uniform} matchgate unitaries by repeatedly choosing nearest-neighbor pairs of qubits at random and applying a random element of $\mc{M}_2$.
This is a variation of the well-known Kac random walk on~$\SO(2n)$ which mixes to approximate uniformity in polynomial time~\cite{jiang2017kac}.
This maybe possibly be an even more efficient way of sampling (approximately) uniformly random matchgate circuits.

%=============================================================================
\section{Statistical scalability}
%=============================================================================
We now consider the scalability of the matchgate benchmarking protocol with respect to the number of qubits.
Recall that in \cref{prot:protocol} we determine the relative frequencies of measurement outcomes for a number of random matchgate sequences.
By taking an empirical average, one obtains an estimate~$\hat{f}_k(m)$ for~$f_k(m)$.
It is a priori unclear whether the variance of this estimate might grow exponentially with the number of qubits $n$, rendering the estimation procedure infeasible beyond a few qubits.
We argue this is not the case by explicitly bounding the variance in the noise-free limit.

\begin{theorem}\label{thm:var_bound}
Consider the estimator $\hat{f}_k(m)$ for the quantity $f_k(m)$ defined in \cref{eq:av_data,eq:av_emp}.
Assuming no noise, its variance is bounded (uniformly in $k$ and $m$) as
\[
\md{V}(\hat{f}_k(m)) = \frac{1}{K} O(\mathrm{poly}(n)).
\]
\end{theorem}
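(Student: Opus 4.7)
The plan relies on two simple reductions followed by a representation-theoretic evaluation. First, since the $K$ rounds are i.i.d., $\md{V}(\hat f_k(m)) = \frac{1}{K} \md{V}(X)$, where $X := \sum_x \alpha_k(x, Q) f_x$ corresponds to a single round, with $Q = Q_m \cdots Q_1$ uniform on $\OR(2n)$ (as the convolution of Haar measures). Second, $\md{V}(X) \leq \md{E}[X^2]$, and since single-shot experiments upper bound the multi-shot case (more shots only reduce variance), it suffices to bound
\begin{equation*}
  V := \md{E}_Q \Bigl[ \sum_x \alpha_k(x, Q)^2 \, p(x|Q) \Bigr],
\end{equation*}
where $p(x|Q) = \bra{x} U(Q) \rho_0 U(Q)\ct \ket{x}$ in the noise-free limit.

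The main step is to evaluate $V$ by expanding in the Majorana basis. Using $P_k(A) = 2^{-n} \sum_{|S|=k} \tr(\gamma[S]\ct A) \gamma[S]$ together with the covariance $U(Q)\ct \gamma[S] U(Q) = \sum_{|T|=k} \det(Q[S,T]) \gamma[T]$, one obtains a sum of Haar integrals of polynomials in the entries of $Q$. Crucially, the SPAM operators have particularly simple Majorana expansions: $\dens{x} = 2^{-n} \prod_j (I + (-1)^{x_j} Z_j)$ is supported on $\gamma[S]$ with $S$ a union of adjacent pairs $\{2j-1, 2j\}$, and $\rho_0 = \dens{0_n}$ is similarly Gaussian, with analogous statements for the $X$-basis SPAM. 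The sum over measurement outcomes $x \in \{0,1\}^n$ then collapses via orthogonality of $\mbb{F}_2^n$-characters, contributing a factor of~$2^n$ that cancels one of the exponentially large factors hidden in $N_k^{-2}$. The remaining Haar integrals are evaluated using representation theory: $\Gamma_k \cong \wedge^k \mbb{R}^{2n}$ is an irreducible representation of $\OR(2n)$ (as established in \cref{lem:rep_th}), so Schur orthogonality determines the second-moment integrals, which---via Pfaffian identities~\cite{ishikawa2006applications}---suffice once the $x$-sum has been performed.

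The main obstacle is that $V$ is \emph{cubic} in the state $U(Q) \rho_0 U(Q)\ct$ (one factor from $p$, two from $\alpha_k^2$), so a fully naive calculation would require third-order Haar integrals on $\OR(2n)$. The Gaussian structure of the SPAM operators together with the outer character sum over $x$ effectively reduces this to a combinatorial sum of second-order integrals, but tracking the cancellation of the exponential prefactor $N_k^{-2} = 2^{2n} \binom{2n}{k}^2 \binom{n}{\lfloor k/2 \rfloor}^{-4}$ against the various factors of $2^n$, $\binom{2n}{k}^{-1}$, and combinatorial terms that arise---and verifying that only polynomially-sized contributions remain---constitutes the heart of the proof.
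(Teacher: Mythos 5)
Your opening reductions are sound and match the paper's initial steps: the i.i.d.\ structure over $K$ sequences gives the $1/K$ factor, and bounding $\md V$ by $\md E[X^2]$ and then by the single-shot worst case is exactly what the law of total variation plus ``throwing away the negative terms'' gives. The Haar-invariance step that fixes $x=0$ (rather than summing over $x$) is also present in the paper, giving the cubic quantity $\alpha_k(0,Q)^2\,p(0|Q,m)$ to integrate.

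The gap is in the claimed mechanism for evaluating that cubic integral. You assert that the Gaussian structure of the SPAM operators together with the $\mathbb{F}_2^n$-character sum over $x$ reduces the problem to \emph{second}-order Haar integrals on $\OR(2n)$, which Schur orthogonality would then determine. This does not hold. After expanding all three factors in the Majorana basis and performing the $x$-sum, you are left with a constraint of the form $a+b+c=0$ among the Pauli labels, but the integrand still carries \emph{three} factors of $\omega(Q)$---you must integrate a degree-three polynomial in the matrix coefficients of $\omega$, i.e.\ a third-moment integral of the representation, not a second-moment one. Schur orthogonality alone gives you second moments; third moments involve the decomposition of $\omega^{\ot 3}$ and the multiplicities of trivial subrepresentations therein, which is substantially more involved and is not bounded by the second-moment formula. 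You also do not track how the normalization $N_k^{-2}$ gets cancelled, beyond asserting it is ``the heart of the proof.''

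The paper avoids the cubic integral entirely via a Cauchy-Schwarz--type step that is the real crux of the argument: using that $P_k$ is an \emph{orthogonal projector} and hence a contraction in the Hilbert--Schmidt norm, it shows
\[
\int_{\OR(2n)}\!dQ\;\bigl(2^{-n}\alpha_k(0,Q)\bigr)^t
\;\leq\;
\frac{\binom{2n}{k}^t}{\binom{n}{\lfloor k/2\rfloor}^{2t}}\int_{\OR(2n)}\!dQ\; p(0|Q,m)^t,
\]
and the combinatorial prefactor is $O(\mathrm{poly}(n))$ by Stirling. This eliminates $P_k$ from the integrand and collapses everything to moments of $p(0|Q,m)=|\bra{\theta}U(Q)\ket{\theta}|^2$. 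These moments are then computed not via second-moment Schur orthogonality on the adjoint action but via the observation that $\ket{0}$ and $\ket{e}=X_n\ket0$ are \emph{lowest weight vectors} for the $\so(2n)$ action on $(\mbb C^2)^{\ot n}$, so that $\ket{0}^{\ot m}\ot\ket{e}^{\ot t-m}$ lies in a \emph{single} irreducible $\Spin(2n)$-representation; Schur's lemma plus the Weyl dimension formula then give the $2t$-th moments in closed form, and Stirling shows they scale as $2^{-tn}\,O(\mathrm{poly}(n))$. Without the contraction bound and the lowest-weight reduction, the route you sketch does not close.
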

A proof of this theorem is given in \cref{appsec:variance}.
The central ingredient in this theorem is a novel moment bound for random matchgates based on the representation theory of $\SO(2n)$, which may be of independent interest:
\begin{lemma}\label{lem:moment_bound}
Let $\ket{\theta}$ be the all-zero ($\ket{0}\tn{n}$) or the all-plus ($\ket{+}\tn{n}$) state, and let $t$ be a fixed integer. Then:
\begin{equation*}
\int_{\SO(2n)} |\bra{\theta} U(Q) \ket{\theta}|^{2t} = 2^{-tn} \, O(\mathrm{poly}(n)).
\end{equation*}
\end{lemma}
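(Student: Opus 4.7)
Our approach is representation-theoretic. Setting $\rho = \ket{\theta}\bra{\theta}$ and using \cref{lem:rep_th} to decompose $\rho = \sum_{k=0}^{2n}\rho_k$ with $\rho_k$ the Hilbert--Schmidt projection of $\rho$ onto $\Gamma_k$, the identification $\Gamma_k \cong \Lambda^k\mathbb{R}^{2n}$ turns the conjugation action of $U(Q)$ on $\Gamma_k$ into the exterior power $\Lambda^k Q$. Combined with the mutual Hilbert--Schmidt orthogonality of the $\Gamma_k$, this yields
\begin{equation*}
|\langle\theta|U(Q)|\theta\rangle|^2 = \tr\bigl(\rho\, U(Q)\rho\, U(Q)^\dagger\bigr) = \sum_{k=0}^{2n} \langle \rho_k, (\Lambda^k Q)\rho_k\rangle.
\end{equation*}
Raising to the $t$-th power, expanding, and integrating term-wise, Schur orthogonality for $\SO(2n)$ produces
\begin{equation*}
\int_{\SO(2n)}|\langle\theta|U(Q)|\theta\rangle|^{2t}\, dQ = \sum_{\vec{k}\in[2n]^t} \Bigl\langle \bigotimes_i \rho_{k_i},\, \Pi^{\vec{k}} \bigotimes_i \rho_{k_i}\Bigr\rangle,
\end{equation*}
where $\Pi^{\vec{k}}$ is the orthogonal projector onto the $\SO(2n)$-invariants in $\bigotimes_i \Lambda^{k_i}\mathbb{R}^{2n}$.

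By the first fundamental theorem of invariant theory for $\SO(2n)$, invariants in $\bigotimes_i \Lambda^{k_i}\mathbb{R}^{2n}$ are spanned by Brauer-type pairings using the invariant bilinear form on $\mathbb{R}^{2n}$, together with additional contractions involving the volume form $\omega\in\Lambda^{2n}\mathbb{R}^{2n}$ (which is $\SO(2n)$- but not $\OR(2n)$-invariant). For fixed $t$ the number of combinatorially distinct invariants grows only polynomially in $n$, and each invariant's normalization carries a dimension factor on the order of $\binom{2n}{k_i}$.

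To extract the $2^{-tn}$ scaling we compute the $\rho_k$ explicitly. For $\ket{\theta}=\ket{0^n}$, the Majorana expansion $\rho = 2^{-n}\prod_j (I - i\gamma_{2j-1}\gamma_{2j})$ gives $\rho_k = 0$ for odd $k$ and $\|\rho_k\|^2 = 2^{-n}\binom{n}{k/2}$ for even $k$. The support of $\rho_k$ on ``matched'' subsets of Majorana indices restricts the relevant invariants to those compatible with this pairing structure, and each contributing $\vec{k}$ produces an inner product of order $2^{-tn}\prod_i\binom{n}{k_i/2}/\binom{2n}{k_i}$, whose per-factor ratio is $O(1)$ uniformly in $k_i$. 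For $\ket{\theta}=\ket{+^n}$, an analogous Majorana expansion obtained using the identity $X_j = (-i)^{j-1}\gamma_1\gamma_2\cdots\gamma_{2j-1}$ yields a similar structural bound, still scaling as $2^{-n}$ times a binomial coefficient. Summing the polynomially many contributing multi-indices delivers the claimed $2^{-tn}\,O(\mathrm{poly}(n))$ bound.

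The hard part is the combinatorial bookkeeping in this final step: identifying precisely which invariants couple nontrivially to $\bigotimes_i \rho_{k_i}$ and verifying that their combined contribution stays $O(\mathrm{poly}(n))$ after factoring out $2^{-tn}$. The Gaussian structure of $\rho$ for $\ket{0^n}$ (and its support on paired index sets) constrains the relevant invariants enough to make this tractable; for $\ket{+^n}$ the less symmetric Majorana support requires additional case-by-case analysis.
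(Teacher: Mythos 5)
Your approach is genuinely different from the paper's, and the contrast is instructive. You work with the conjugation representation $\omega(Q)$ on the operator space $\Gamma = \bigoplus_k \Gamma_k \cong \bigoplus_k \Lambda^k\mathbb{R}^{2n}$, which is a bona fide representation of $\SO(2n)$ (and even $\OR(2n)$) rather than a projective one; this cleanly sidesteps the $\Spin(2n)$ versus $\SO(2n)$ subtlety that the paper must handle explicitly. Your opening identity $\lvert\langle\theta|U(Q)|\theta\rangle\rvert^2 = \sum_k \langle\rho_k, \omega(Q)\rho_k\rangle$ and the explicit computation $\lVert\rho_k\rVert^2 = 2^{-n}\binom{n}{k/2}$ for $\ket{0^n}$ are both correct, and the term-by-term Haar integral as a pairing with the invariant projector $\Pi^{\vec k}$ on $\bigotimes_i\Lambda^{k_i}\mathbb{R}^{2n}$ is a sound reduction. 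The paper instead works in the fermionic Fock picture: $\ket{0}$ and $\ket{e}=a_n^\dagger\ket{0}$ are \emph{lowest weight vectors}, so their $t$-fold tensor products $\ket{\Omega^t_m} = \ket{0}^{\ot m}\ot\ket{e}^{\ot(t-m)}$ are again lowest weight vectors and hence lie in a \emph{single} $\Spin(2n)$-irrep $\rho_{\lambda^t_m}$. Schur orthogonality then collapses the integral to $1/\dim(\rho_{\lambda^t_m})$ exactly, and everything reduces to the Weyl dimension formula and Stirling.

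The genuine gap in your argument is precisely the step you flag yourself as ``the hard part.'' Because $\rho_k$ is a generic vector in $\Lambda^k\mathbb{R}^{2n}$ rather than a highest- or lowest-weight vector, the tensor $\bigotimes_i\rho_{k_i}$ has support across many irreducible constituents of $\bigotimes_i\Lambda^{k_i}\mathbb{R}^{2n}$, and $\Pi^{\vec k}$ is not of rank one. You assert without proof both that the number of combinatorially distinct $\SO(2n)$-invariants in $\bigotimes_i\Lambda^{k_i}\mathbb{R}^{2n}$ is $O(\mathrm{poly}(n))$ for fixed $t$, and that each invariant's pairing with $\bigotimes_i\rho_{k_i}$ is bounded as claimed. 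Neither is obvious: the total degree $K = \sum_i k_i$ can be as large as $2tn$, so the naive count of Brauer-type pairings is superexponential in $n$, and one has to argue that antisymmetrization plus the structure of the $\rho_{k_i}$ kills almost all of them. There is also the extra class of Pfaffian/Levi-Civita contractions specific to $\SO$ versus $\OR$ that you mention but do not control. This is the entire content of the estimate, so as written the proof is not complete; the lowest-weight-vector observation is what lets the paper bypass this combinatorics entirely, and without a replacement for it your route would require a substantial additional invariant-theoretic argument.
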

% This is reminiscent of the Weingarten integral formula for the unitary group.
A proof of this lemma is given in \cref{appsec:moments}.
We note that our variance upper bound in \cref{thm:var_bound} is likely loose and we expect the real variance to be substantially smaller.
The theorem can also be extended to the case of gate-dependent noise at the cost of some technical complications, but we do not pursue this here.

%=============================================================================
\section{Experimental demonstration}
%=============================================================================

We apply the matchgate benchmarking protocol (\cref{prot:protocol}) to benchmark the native $XY(\theta)$ gate~\cite{abrams2019implementation} between two qubits on the Rigetti Aspen-8 chip. A collection of circuits was run consecutively with fixed sequence lengths ranging from $m=2$ to $m=24$, sampling $K=64$ random sequences of orthogonal matchgates per sequence length, and performing $L=400$ measurement repetitions (shots) per sequence; for a total of $5.9\times 10^5$ individual shots.
The code and data for this experiment can be found at~\cite{data}.
\Cref{fig:exp-data} shows the results of this experiment. All error bars are bootstrapped $95\%$ confidence intervals.
For $n=2$, there are five exponential decays, associated to $k\in [4]$.
From fitting the experimental data to single exponentials we obtain the following values for the Majorana fidelities:
% \begin{center}
% {\small \begin{tabular}{|c|c|c|c|c|}
% \hline
% $\lambda_0$& $\lambda_1$&$\lambda_2$&$\lambda_3$&$\lambda_4$\\\hline
% ${1.000\!\pm\!0.001}$&${0.78\!\pm \!0.05}$&${0.85\!\pm \!0.02}$&${0.87\!\pm\! 0.02}$&${0.83\!\pm \!0.02}$\\
% \hline
% \end{tabular}}
% \end{center}
\begin{center}
\small \begin{tabular}{|c|c|}
\hline
$\lambda_0$& ${1.000\!\pm\!0.001}$\\
\hline
$\lambda_1$& ${0.78\!\pm \!0.05}$\\
\hline
$\lambda_2$& ${0.85\!\pm \!0.02}$\\
\hline
$\lambda_3$& ${0.87\!\pm\! 0.02}$\\
\hline
$\lambda_4$& ${0.83\!\pm \!0.02}$\\
\hline
\end{tabular}
\end{center}
% \begin{center}
% \begin{tabular}{|c|c|}
% \hline
% Majorana fidelity & estimated value\\\hline
% $\lambda_0$ & ${1.000\pm0.001}$\\
% $\lambda_1$ & ${0.78\pm 0.05}$\\
% $\lambda_2$ & ${0.85\pm 0.02}$\\
% $\lambda_3$ & ${0.87\pm 0.02}$\\
% $\lambda_4$ & ${0.83\pm 0.02}$\\
% \hline
% \end{tabular}
% \end{center}
%
From these decays and \cref{eq:fid} we can infer that the expected average fidelity of a random two-qubit matchgate is~$F={0.88\pm0.02}$.
Since a random two qubit matchgate requires four~$XY(\theta)$ gates, we can make a heuristic lower bound estimate of the average fidelity of the $XY(\theta)$ gate by assuming that the single-qubit gates are essentially noiseless and that the fidelity is approximately multiplicative, finding~$F={0.97}$ for a single $XY(\theta)$ gate. We compare this estimate to standard interleaved RB applied to the same~$\iSWAP$ gate ($XY(\pi)$), where we observe~$F={0.986\pm0.006}$, consistent with the fidelity range for $XY$-gates observed in~\cite{abrams2019implementation}. We attribute additional error in the matchgate construction to single-qubit rotations.

\begin{figure}
\includegraphics[scale=0.55]{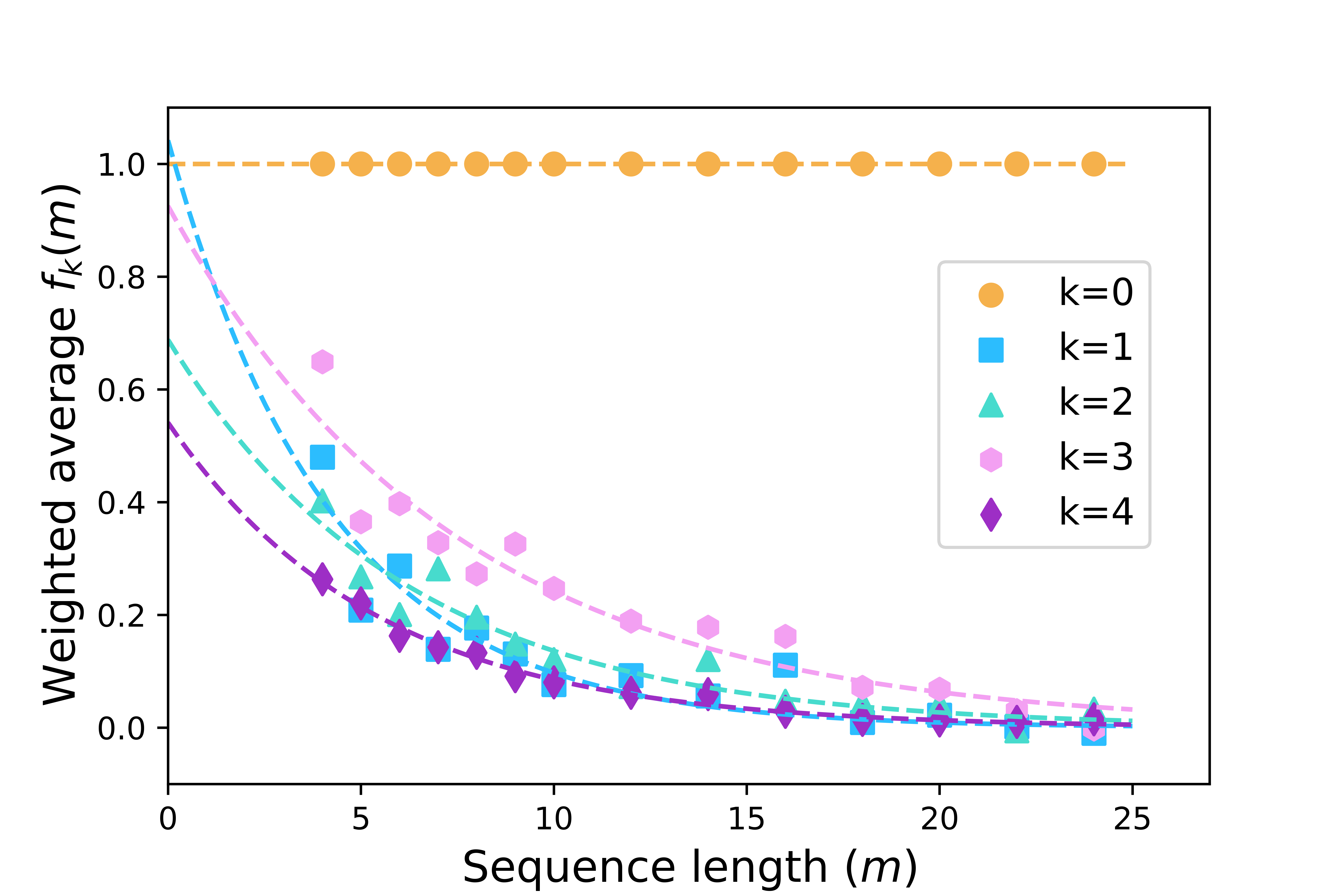}
\caption{Five exponential decays associated with performing matchgate benchmarking using circuits generated by $XY(\theta)$ gates. For readability each exponential decay is offset vertically by some amount. Based on this data we can conclude a two-qubit matchgate circuit fidelity of $F_{\mathrm{avg}} = {0.88\pm0.02}$.}\label{fig:exp-data}
\end{figure}

%=============================================================================
\section{Extensions and variations of the gateset}
%=============================================================================
The benchmarking procedure given in \cref{prot:protocol} for nearest-neighbor generalized matchgate circuits on a line can readily be adapted to related gatesets with interesting properties.

First, following Knill~\cite{knill2001fermionic} (see also \cite{jozsa2013jordan}), one can extend the generalized matchgates by arbitrary single-qubit gates on the first qubit on the line. This corresponds to gates generated by Hamiltonians composed of linear Majorana terms, i.e. $H_{\mathrm{lin}} = \sum_{i}v_i\gamma_i$. Note that the distinction between matchgates and generalized matchgates collapses in this case.
Equivalently, one can add rotations along the $ZX$ axis between the first two neighboring qubits (a cross-resonance gate~\cite{magesan2020effective,sheldon2016procedure}).
This extension corresponds to the group $\SO(2n+1)$~\cite{knill2001fermionic}, which has the spaces~$\Gamma_0$ and~$\Gamma_{2k'-1} \oplus \Gamma_{2k'}$ for $k'\in[n]$ as mutually inequivalent irreducible subspaces.
Matchgate benchmarking is easily adapted by using the correlation functions $\alpha'_{k'}(x,R) \propto \tr\bigl(E_x (P_{2{k'}-1} \!+ \!P_{2k'})( U(R)\rho_0 U(R)\ct) \bigr)$ with $R \in \SO(2n+1)$.
Assuming gate-independent noise, one finds that $f_{k'}(m) = A_{k'}\lambda_{k'}^m$, where the $n+1$ parameters~$\lambda_{k'}\in\mathbb R$ depend only on the noisy implementation of the circuits, and the average gate fidelity can be extracted as $2^{-n} \sum_{k'=0}^{n}\binom{2n+1}{k'} \lambda_{k'} = (2^n + 1)F_{\mathrm{avg}}(\Lambda)-1$.

Second, we can also extend the nearest-neighbor generalized matchgate circuits on a line to those on a circle. This corresponds to gates generated by Hamiltonians of the form~$H_{\mathrm{circle}} = \sum_{i,j}\alpha_{i,j} \gamma[\{i,j\}] + \sum_{i,j} \beta_{i,j} \gamma[\{i,j\}^c]$, together with a single $X$ gate. Their classical simulability was, to our knowledge, first noted in~\cite{brod2014computational}.
One can again work out the corresponding representation theory and write down appropriate correlation functions.

Third, one can also perform matchgate benchmarking with the ordinary matchgate group $\mc{M}_n$ (without the additional bit-flip gate).
Now the representations $\Gamma_k, \Gamma_{2n-k}$ for $k<n$ become equivalent, and the representation~$\Gamma_n$ splits into two inequivalent representations.
The correlation functions change to $\alpha''_k(x,R) \propto\tr\bigl(E_x (P_{k} \!+ \!P_{2n-k})( U(R)\rho_0 U(R)\ct) \bigr)$.
Due to the equivalence of representations, the data $f_k(m)$ for $k\in[n]$ must be fitted to a $2\times 2$ \emph{matrix}-exponential decay $\tr(A_kM_k^m)$, with the eigenvalues of the matrices $M_k$ carrying the fidelity information. This is a significantly harder fitting problem in practice.

Finally, we note that the (orthogonal) matchgate group can be conjugated by a Clifford operator (i.e. consider gates of the form $CU(Q)C\ct$ where $U(Q)\in \mc{M}_n$ and $C$ a Clifford operator) without losing classical simulability~\cite{jozsa2008matchgates}. This conjugation leaves the representation structure, and hence the matchgate benchmarking protocol, unchanged (apart from an appropriate change of initial states and measurement basis). One can for instance consider the orthogonal matchgate group rotated by single qubit Hadamard gates (on each qubit). This rotated orthogonal matchgate group is generated by nearest neighbor $ZZ$ rotations, single qubit $X$ rotation, and $Z$ phase flips. As the $ZZ$ interaction is natural in superconducting circuit qubits and often used to generate two qubit gates~\cite{dicarlo2009demonstration,long2021universal}, this potentially extends the usefulness of matchgate benchmarking.
 % This protocol can also be extended to matchgates (without bit flips) on the circle, corresponding to the action of $\SO(2n)\times \SO(2n)$, under which the representations $(I\!\pm \!Z\tn{n})\Gamma_k\oplus\Gamma_{2n-k}$ are irreducible

\begin{acknowledgments}
% \smallskip\emph{Acknowledgments.}
We would like to acknowledge Harold Nieuwboer, Sergii Strelchuk, Ingo Roth, and Emilio Onorati for useful conversations.
MW acknowledges support by an NWO Veni Innovational Research Grant no.~680-47-459, NWO grant~OCENW.KLEIN.267, and the Deutsche Forschungsgemeinschaft (DFG, German Research Foundation) under Germany's Excellence Strategy - EXC~2092~CASA - 390781972.
This material is based upon work supported by the Defense Advanced Research Projects Agency (DARPA) under agreement No.~HR00112090058.
While preparing this manuscript, we became aware of~\cite{claes2020character}, where a similar procedure for the standard matchgate group is proposed.
\end{acknowledgments}

\bibliographystyle{unsrtnat}
\bibliography{simfermbib}

%=============================================================================
\onecolumn
\clearpage
\appendix
%=============================================================================
\section{Proof of the variance bound}\label{appsec:variance}
%=============================================================================
In this section we give proofs of various technical claims made in the main text.
Consider the representation~$\omega$ of the group~$\OR(2n)$ on the vector space $\Gamma$ of $n$-qubit linear operators, given by $\omega(Q)(\rho) = U(Q) \rho U(Q)^\dagger$ for $Q\in\OR(2n)$ and $\rho\in\Gamma$. This corresponds to the conjugate action of the generalized matchgate group.
Recall that the Majorana product operators $\gamma[S]$ for $S \subseteq [2n]$ form a basis of the space $\Gamma$, and that we defined the subspaces $\Gamma_k = \langle \gamma[S] \;|\; S\subseteq [2n],\; \lvert S\rvert=k\rangle \subseteq \Gamma$ for $k\in\{0,1,\dots,2n\}$.
It is clear from \cref{eq:match_to_orth} that the subspaces $\Gamma_k \subseteq \Gamma$ are invariant, i.e., that $\omega(Q)(\Gamma_k) \subseteq \Gamma_k$. Hence we can consider the restrictions $\omega_k$ of $\omega$ to $\Gamma_k$.
\begin{lemma}\label{lem:rep_th}
  The representation $\omega$ of $\OR(2n)$ decomposes as a direct sum of $2n$ irreducible subrepresentations $\omega_k$ for~$k\in \{0,\ldots 2n\}$, which are all inequivalent.
\end{lemma}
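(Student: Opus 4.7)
The plan is to identify each $\omega_k$ with the $k$-th exterior power of the defining representation of $\OR(2n)$ and then combine a standard result on fundamental representations with a short character computation on the reflection element.

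First, \cref{eq:match_to_orth} says that the action $\omega_1$ on $\Gamma_1$ (spanned by the Majoranas) is precisely the defining representation $V \cong \mathbb R^{2n}$ of $\OR(2n)$. Since distinct Majoranas anticommute, the linear map sending $\gamma[S] \mapsto e_{s_1}\wedge\cdots\wedge e_{s_k}$ for $S=\{s_1<\cdots<s_k\}$ is an $\OR(2n)$-intertwiner between $\omega_k$ and $\wedge^k V$. The direct-sum decomposition $\Gamma = \bigoplus_{k=0}^{2n} \Gamma_k$ is then immediate from the basis $\{\gamma[S] : S \subseteq [2n]\}$ of $\Gamma$.

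Next I would prove the irreducibility of each $\omega_k$. For $k \neq n$ this is classical: $\wedge^k V$ is a fundamental representation of $\SO(2n)$ of highest weight $\omega_k$ and hence irreducible, a fortiori as an $\OR(2n)$-representation. The delicate case is $k = n$, where $\wedge^n V$ decomposes under $\SO(2n)$ as a direct sum $V_+ \oplus V_-$ of self-dual and anti-self-dual forms (the eigenspaces of the Hodge star). The reflection $F$ attached to the $X_n$ bit-flip has determinant $-1$ and therefore interchanges $V_+$ with $V_-$, so $\Gamma_n$ becomes irreducible under the full $\OR(2n)$. This gluing by the bit flip is the main technical point and is exactly the statement in the main text about why $X_n$ is essential.

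Finally I would establish pairwise inequivalence. Since $\dim \Gamma_k = \binom{2n}{k}$, the only possible isomorphism is between $\Gamma_k$ and $\Gamma_{2n-k}$; under $\SO(2n)$ these are in fact equivalent via Hodge duality, so the separation must come from the reflection. I would compute the character of $\omega_k$ on $F = \mathrm{diag}(1,\ldots,1,-1) \in \OR(2n)$, which acts on $\gamma[S]$ by the sign $(-1)^{\mathbb 1[2n \in S]}$; this yields
\[
\tr(\omega_k(F)) = \binom{2n-1}{k} - \binom{2n-1}{k-1},
\]
which is nonzero for $k \neq n$ and flips sign under $k \mapsto 2n-k$, separating $\omega_k$ from $\omega_{2n-k}$ as $\OR(2n)$-representations. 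The main obstacle throughout is the irreducibility at the middle index $k = n$; once that is in hand, the remaining steps are bookkeeping around the exterior-power identification and the character computation above.
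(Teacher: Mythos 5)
Your proof is correct and uses the same key identification as the paper: the intertwiner $\gamma[S] \mapsto e_{s_1}\wedge\cdots\wedge e_{s_k}$ sending $\omega_k$ to the $k$-th exterior power of the defining representation of $\OR(2n)$. Where the paper then simply cites Goodman--Wallach (Cor.~5.5.6) for irreducibility and inequivalence of the exterior powers under $\OR(2n)$, you unpack what that citation is actually doing: for $k \neq n$, $\wedge^k V$ is already $\SO(2n)$-irreducible; at $k=n$ the $\SO(2n)$-decomposition $\wedge^n V \cong V_+\oplus V_-$ is glued into a single $\OR(2n)$-irreducible because the determinant~$-1$ reflection $F$ exchanges the Hodge-star eigenspaces; and the only dimension-coincidence $\omega_k \leftrightarrow \omega_{2n-k}$ is separated by the character value $\tr\,\omega_k(F)=\binom{2n-1}{k}-\binom{2n-1}{k-1}$, which is nonzero for $k\neq n$ and antisymmetric under $k\mapsto 2n-k$. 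All of these steps are correct, and they make the role of the extra bit-flip gate concrete in a way the paper's one-line citation does not. One small imprecision: $\wedge^{n-1}V$ is not literally a fundamental representation of $\SO(2n)$ (its highest weight is the sum $\varpi_{n-1}+\varpi_n$ of the two half-spin weights), but it is still irreducible, so the conclusion stands.
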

\begin{proof}
  We only need to prove that the representations $\omega_k$ are irreducible and inequivalent.
  Consider the linear map $\Phi_k \colon \Gamma_k \mapsto \wedge^k \mbb C^{2n}$ that sends each $\gamma[S]$ for $S \subseteq [2n]$ to the antisymmetric tensor product~$\wedge_{s \in S} \ket s$.
  This is an isomorphism and moreover $\Phi_k(\omega(Q) \rho) = Q^{\otimes k} \Phi_k(\rho)$, as follows from \cref{eq:match_to_orth}.
  Thus we can infer the irreducibility and mutual inequivalence of $\omega_k$ from the representation theory of $\OR(2n)$ on anti-symmetric tensor powers, which is well-known (see, e.g.,~\cite[Cor.~5.5.6.]{goodman2009symmetry}).
\end{proof}

We now give proof of our variance bound in the noise-free limit.
\begin{theoremone}[restated]
Consider the estimator $\hat{f}_k(m)$ for the quantity $f_k(m)$ defined in \cref{eq:av_data,eq:av_emp}.
Assuming no noise, its variance is bounded (uniformly in $k$ and $m$) as
\[
\md{V}(\hat{f}_k(m)) = \frac{1}{K} O(\mathrm{poly}(n)).
\]
\end{theoremone}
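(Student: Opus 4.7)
The plan is to reduce bounding $\md{V}(\hat f_k(m))$ to a second-moment estimate on a single sample, and then to control that second moment using the irreducible decomposition of $\omega$ from \cref{lem:rep_th} together with the moment bound in \cref{lem:moment_bound}. Since the $K$ sequences summed in \cref{eq:av_emp} are i.i.d., I write $\md{V}(\hat f_k(m)) = K^{-1}\md{V}(\hat Y)$ where $\hat Y = \sum_x \alpha_k(x,Q)\,f_x$ is built from one random sequence $Q=Q_m\cdots Q_1\in\OR(2n)$ and the empirical frequencies $f_x$ from $L$ shots. By left-invariance of Haar measure $Q$ is itself Haar-uniform on $\OR(2n)$, so $m$ drops out of the analysis and the claimed uniformity in $m$ is automatic. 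Conditioning on $Q$ and bounding variance by second moment gives
\begin{equation*}
\md{V}(\hat Y) \le \frac{1}{L}\,\md{E}_Q\Bigl[\sum_x p(x|Q)\,\alpha_k(x,Q)^2\Bigr] + \md{E}_Q\Bigl[\Bigl(\sum_x p(x|Q)\,\alpha_k(x,Q)\Bigr)^2\Bigr],
\end{equation*}
so it suffices to bound both terms by $O(\mathrm{poly}(n))$.

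In the noise-free setting with perfect SPAM we have $p(x|Q) = |\bra{x}U(Q)\ket{\theta}|^2$ (up to a fixed unitary basis change that does not affect the analysis) together with $\alpha_k(x,Q) = N_k^{-1}\bra{x}P_k(U(Q)\dens{\theta}U(Q)^\dagger)\ket{x}$. Expanding $P_k$ in the Majorana basis $\{\gamma[S]:|S|=k\}$ and invoking the antisymmetric-power structure identified in the proof of \cref{lem:rep_th}, each matrix element $\bra{x}U(Q)\gamma[S]U(Q)^\dagger\ket{x}$ is a linear combination of $k\times k$ minors of $Q$ and so bounded in modulus by $1$. Together with $\|P_k(\rho_Q)\|_\infty\le\|P_k(\rho_Q)\|_2\le\|\rho_Q\|_2=1$, this yields a pointwise bound $|\alpha_k(x,Q)|\le N_k^{-1}\,O(\mathrm{poly}(n))$, after which both terms in the display reduce to Haar integrals of $|\bra{x}U(Q)\ket{\theta}|^{2t}$ for small constant $t\le 2$, summed over $x$ and multiplied by explicit combinatorial weights coming from the expansion of $P_k$.

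I would then apply \cref{lem:moment_bound} to each of these integrals (noting that the Haar integral over $\OR(2n)$ is a convex combination of Haar integrals over $\SO(2n)$ and the coset $X_n\cdot\SO(2n)$, the latter reducing to the former by the symmetries of $\ket{0}\tn{n}$ and $\ket{+}\tn{n}$ under $X_n$). Each moment contributes a factor $2^{-tn}\,O(\mathrm{poly}(n))$; summing over the $2^n$ measurement outcomes $x$ contributes $2^n$; and each factor of $N_k^{-1} = 2^n\binom{2n}{k}/\binom{n}{\lfloor k/2\rfloor}^2$ contributes another $2^n\cdot O(\mathrm{poly}(n))$. The exponents match exactly in both terms, all factors of $2^n$ cancel, and what remains is the polynomial prefactor required by the theorem.

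The main obstacle is precisely this cancellation of exponential factors. Since $\alpha_k$ can itself be as large as $O(2^n)$ (attained near $k\approx n$, where $N_k^{-1}$ is worst), any argument based on pointwise or operator-norm bounds alone would give an exponentially bad variance; only Haar averaging supplies the compensating $2^{-tn}$ decay. The entire role of \cref{lem:moment_bound} is to establish that, for the specific product states $\ket{0}\tn{n}$ and $\ket{+}\tn{n}$, the typical overlap $|\bra{x}U(Q)\ket{\theta}|^2$ is of order $2^{-n}$ up to polynomial corrections, despite $\omega$ being highly reducible (which a priori could concentrate $U(Q)\ket{\theta}$ in an exponentially small subspace). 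Performing this exponent bookkeeping simultaneously for all $k$ and both SPAM choices is the main combinatorial work, but introduces no new conceptual ingredient beyond what is outlined above.
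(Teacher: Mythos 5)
Your reduction to per-sequence variance and the general strategy of controlling second moments via the Haar moment bound (\cref{lem:moment_bound}) match the paper's outline, and you correctly diagnose that the essential danger is the $O(2^n)$ size of $\alpha_k$. However, there is a real gap exactly at the point you gloss over: the claim that ``both terms in the display reduce to Haar integrals of $|\bra{x}U(Q)\ket{\theta}|^{2t}$.'' The quantity $\alpha_k(x,Q)$ contains a projection $P_k$ inside the trace and is \emph{not} a power of an overlap amplitude; expanding $P_k$ in the Majorana basis gives a sum of terms $\tr(E_x\gamma[S])\tr(\gamma[S]\rho_Q)$, and the Haar integrals of products of such terms with $p(x|Q)$ are not of the form $\int dQ\,|\bra{x}U(Q)\ket{\theta}|^{2t}$. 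Your pointwise bound $|\alpha_k|\leq N_k^{-1}O(\mathrm{poly}(n))$ does not help here for precisely the reason you yourself note: if applied to both factors of $\alpha_k$ it produces $N_k^{-2}=2^{2n}O(\mathrm{poly}(n))$ with no compensating $2^{-tn}$, and if applied only once it still leaves a residual factor with $P_k$ inside that you cannot identify with a moment of the overlap.

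The missing ingredient is the paper's argument (in \cref{appsec:variance}) that the moments of $\alpha_k$ can be bounded \emph{as integrals} by the corresponding moments of $p(0|Q,m)$: using the self-adjointness of $P_k$ and the fact that it is a contraction in the Hilbert--Schmidt norm, together with Cauchy--Schwarz and Haar invariance, one shows
\[
\int_{\OR(2n)}dQ\,(2^{-n}\alpha_k(0,Q))^t\;\leq\;\frac{\binom{2n}{k}^t}{\binom{n}{\lfloor k/2\rfloor}^{2t}}\int_{\OR(2n)}dQ\,p(0|Q,m)^t,
\]
with the combinatorial prefactor controlled by Stirling. This inequality is the step that lets the $2^{-tn}$ decay from \cref{lem:moment_bound} reach the $\alpha_k$-containing terms at all. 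Without it your exponent bookkeeping does not close. A secondary error: the paper's chain of AM--GM inequalities (first to decouple $x$ from $x'$, then to separate powers of $\alpha_k$ from powers of $p$) raises the moment order, so the needed exponents are $t\in\{3,4\}$ in the $\int p^t$ normalization, not $t\leq 2$ as you claim; this does not change the conclusion (the moment bound holds for any fixed $t$), but it indicates the proposal has not actually been run to completion.
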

\begin{proof}
Begin by considering general correlation functions $\alpha_k(x,Q) = N_k^{-1}\tr(E_xP_k\omega(Q)(\rho_0))$.
Consider the estimator $\hat{f}_k(m)$ obtained by performing \cref{prot:protocol} for a fixed sequence length $m$, sampling $K$ random sequences and performing $L$ measurements per sequence.
For any fixed $Q\in \OR(2n)$, let $X_k(Q)$ be a random variable taking value $\alpha_k(x,Q)$ with probability $p(x|Q,m)$.
Also let $X_k^{\{L\}}(Q) = \frac{1}{L} \sum_{i=1}^L X^i_k(Q)$ denote the random variable defined by averaging $L$ i.i.d.\ copies $X^i_k(Q)$ of $X_k(Q)$.
Finally let $Y_k$ be the random variable defined by drawing $Q$ uniformly at random and taking the corresponding random variable $X_k^{\{L\}}(Q)$ (that is, $Y_k = X_k^{\{L\}}(Q)$ where $Q$ is uniformly random).
It is clear that the mean of $Y_k$ is $f_k(m)$ ,and moreover the variance of the estimator $\hat{f}_k$ is equal to $\frac{1}{K} \md{V}(Y_k)$.
The variance of $Y_k$ is (by the law of total variation):
\begin{align}
% \md{V}(Y_k)  &= \md{V}_Q \big[ \md{E}(X_k^{\{L\}}(Q))] + \int_{\OR(2n)} dQ\, \md{V}\big[ X_k^{\{L\}}(Q)\big],
\md{V}(Y_k)  &= \md{V}_Q \bigl[ \md{E}\bigl(X_k^{\{L\}}(Q)\bigr) \bigr] + \md{E}_Q \bigl[ \md{V}\bigl( X_k^{\{L\}}(Q) \bigr) \bigr],
\end{align}
where the inner mean and variance are computed for arbitrary but fixed $Q$, while the outer ones are computed with respect to the uniformly random choice of $Q$. Now using the definitions of variance and expectation we get
\begin{align}
\md{V}(Y_k) &= \int_{\OR(2n)} dQ \bigg(\sum_{x\in \{0,1\}^n} \alpha_k(x, Q) p(x|Q,m)\bigg)^2 - f_k(m)^2 \\&\hspace{3em}+ \frac{1}{L} \bigg[\int_{\OR(2n)} dQ \Big(\sum_{x\in \{0,1\}^n} \alpha_k(x, Q)^2 p(x|Q,m)\Big) -\int_{\OR(2n)} dQ\, \Big(\sum_{x\in \{0,1\}^n} \alpha_k(x, Q) p(x|Q,m)\Big)^2\bigg],
\end{align}

Throwing away the negative terms we get
\begin{equation}
\md{V}(Y_k)\leq \frac{L-1}{L} \int_{\OR(2n)} dQ\,\sum_{x,x'\in \{0,1\}^n} \alpha_k(x, Q)\alpha_k(x', Q) p(x|Q,m) p(x'|Q,m) + \frac{1}{L}\sum_{x\in \{0,1\}^n} \alpha_k(x, Q)^2 p(x|Q,m).
\end{equation}
Now using the fact that the $\alpha_k(x,Q)$ are real functions and that for all $a,b\in \mathbb{R}$ we have $2ab\leq a^2 + b^2$ we can simplify this further to
\begin{equation}
\md{V}(Y_k)\leq \frac{L-1}{L} 2^n\int_{\OR(2n)} dQ\,\sum_{x\in \{0,1\}^n} \alpha_k(x, Q)^2 p(x|Q,m)^2 + \frac{1}{L}\sum_{x\in \{0,1\}^n} \alpha_k(x, Q)^2 p(x|Q,m).
\end{equation}
Now define $\ket{\theta_x^k} = \ket{x}$ for even $k$ and $\ket{\theta_x^k} = H\tn{n}\ket{x}$ for odd $k$ (where $H$ is the single qubit Hadamard operator).

We begin by noting that for both~$k$ even and $k$ odd there always exists a generalized matchgate $U(Q_x^k)$ s.t.\ $U(Q_x^k)\ket{\theta_0^k} = \ket{\theta_x^k}$. Hence by the invariance of the Haar measure under left multiplication and the fact that $P_k$ commutes with $\omega(Q_x^k)$ we have that
\begin{equation}
\md{V}(Y_k)\leq \frac{L-1}{L} 2^{2n}\int_{\OR(2n)} dQ \alpha_k(0, Q)^2 p(0|Q,m)^2 + 2^{n} \frac{1}{L} \alpha_k(0, Q)^2 p(0|Q,m).
\end{equation}
We can drop the constant factors of $L$ and upper bound the mixed integrals by monomial integrals (by using that $\alpha_k(0, Q)^2 p(0|Q,m)^2\leq (\alpha_k(0, Q)^4 +p(0|Q,m)^4)/2$ :
\begin{align}
\md{V}(Y_k)&\leq 2^{4n}\max
\left[
\int_{\OR(2n)} dQ\, (2^{-n}\alpha_k(0, Q))^4, \;\;\int_{\OR(2n)} dQ\, p(0|Q,m)^4
\right]\\
&\hspace{15em}
+ 2^{3n}\max
\left[
\int_{\OR(2n)} dQ\, (2^{-n}\alpha_k(0, Q))^3, \;\;\int_{\OR(2n)} dQ\, p(0|Q,m)^3
\right].
\end{align}
First we consider the integrals over the correlation function $\alpha_k$. Setting $t\in \{3,4\}$ we calculate the integral
\begin{align}
\int_{\OR(2n)} dQ\, (2^{-n}\alpha_k(0, Q))^t &= \frac{2^{-tn}}{N_k^t}\int_{\OR(2n)}\tr\big(\dens{\theta^k_0} P_k(U(Q)\dens{\theta^k_0}U(Q)\ct)\big)^t \\
&=\frac{\binom{2n}{k}^t}{\binom{n}{\lfloor{k/2\rfloor}}^{2t}}\int_{\OR(2n)}dQ\, \int_{\OR(2n)}dQ'\,\tr\big(\dens{\theta^k_0} P_k(U(Q')U(Q)\dens{\theta^k_0}U(Q)\ct U(Q')\ct)\big)^t \\
&=\frac{\binom{2n}{k}^t}{\binom{n}{\lfloor{k/2\rfloor}}^{2t}}\tr\bigg(\int_{\OR(2n)}dQ'\, U(Q')\dens{\theta^k_0}U(Q')\ct P_k\bigg(\int_{\OR(2n)}dQ\, U(Q)\dens{\theta^k_0}U(Q)\ct\bigg)\bigg)^t,
\end{align}
using the invariance of the Haar measure and the fact that $P_k$ commutes with the conjugate action $\omega(Q)$. Now, since $P_k$ is an orthogonal projector, it is a contraction in the Hilbert-Schmidt norm, and we get (using Cauchy-Schwartz)
\begin{align}
 \int_{\OR(2n)} dQ\, (2^{-n}\alpha_k(0, Q))^t &\leq\frac{\binom{2n}{k}^t}{\binom{n}{\lfloor{k/2\rfloor}}^{2t}}\tr\bigg(\int_{\OR(2n)}dQ'U(Q')\dens{\theta^k_0}U(Q')\ct \int_{\OR(2n)}dQ\, U(Q)\dens{\theta^k_0}U(Q)\ct \bigg)^t.
 \end{align}
 Using the invariance of the Haar measure (to absorb one of the integrals) and the definition of $p(0|Q,m)$ we see that the RHS becomes
 \begin{align}
 \frac{\binom{2n}{k}^t}{\binom{n}{\lfloor{k/2\rfloor}}^{2t}}&\tr\bigg(\int_{\OR(2n)}dQ'U(Q')\dens{\theta^k_0}U(Q')\ct \int_{\OR(2n)}dQ\, U(Q)\dens{\theta^k_0}U(Q)\ct \bigg)^t\\
 &= \frac{\binom{2n}{k}^t}{\binom{n}{\lfloor{k/2\rfloor}}^{2t}}\int_{\OR(2n)}dQ\, \tr\big(\dens{\theta^k_0})U(Q)\dens{\theta^k_0}U(Q)\ct \big)^t\\
 &= \frac{\binom{2n}{k}^t}{\binom{n}{\lfloor{k/2\rfloor}}^{2t}}\int_{\OR(2n)}dQ\, p(0|Q,m)^t,
 \end{align}
 and hence
 \begin{align}
 \int_{\OR(2n)} dQ\, (2^{-n}\alpha_k(0, Q))^t \leq \frac{\binom{2n}{k}^t}{\binom{n}{\lfloor{k/2\rfloor}}^{2t}}\int_{\OR(2n)}dQ\, p(0|Q,m)^t.
 \end{align}
 Now we use Stirling's approximation ($\sqrt{2\pi n} (n/e)^n e^{\frac 1{12n+1}}\leq n! \leq \sqrt{2\pi n} (n/e)^n e^{\frac 1{12n}}$) to note that $\frac{\binom{2n}{k}}{\binom{n}{\lfloor{k/2\rfloor}}^2} = O\big(\mathrm{poly}(n,k)\big)$ we thus only need to consider the integral over $p(0|Q,m)$:
\begin{align}
\int_{\OR(2n)} dQ\, p(0|Q,m)^t &= \int_{\OR(2n)} dQ \tr\big(\dens{\theta_0^k} U(Q) \dens{\theta_0^k}U(Q)\ct\big)^t\\
 &= \int_{\SO(2n)} dR \tr\big(\dens{\theta_0^k} U(R) \dens{\theta_0^k}U(R)\ct\big)^t \\
 &\hspace{10em}+ \int_{\SO(2n)} dR \tr\big(\dens{\theta_0^k}U(R)X_n \dens{\theta_0^k}X_nU(R)\ct\big)^t,
\end{align}
where $X_n$ is a bit flip on the last qubit. Defining $\ket{e} = X_n\ket{0}\tn{n}$ and noting that $X_n\ket{+}^{\ot n} = \ket{+}^{\ot n}$ we see that for both~$k$ even and $k$ odd:
\begin{equation}
\int_{\OR(2n)} dQ\, p(0|Q,m)^t \leq 2 \max \bigg\{\int_{\SO(2n)} dR \tr\big(\dens{\phi} U(R) \dens{\phi}U(R)\ct\big)^t \;\;\bigg|\;\; \ket{\phi} \in \{\ket{0},\ket{e},\frac{1}{\sqrt{2}}(\ket{0}+\ket{e})\}\bigg\},
\end{equation}
where we used that $\ket{+}^{\ot n} = U(R_+) \frac{1}{\sqrt{2}}(\ket{0}+\ket{e})$ for some $R_+\in \SO(2n)$ as well as Haar invariance. All these integrals are bounded by the moment bounds in \cref{lem:integral_bounds}, and we find that
\begin{equation}
 \int_{\OR(2n)} dQ\, p(0|Q,m)^t = 2^{-tn} \, O(\mathrm{poly}(n)).
\end{equation}
Hence we have
\begin{align}
\md{V}(Y_k)= O(\mathrm{poly}(n)),
\end{align}
for all $k \in [2n]$, which proves the theorem.
\end{proof}

%=============================================================================
\section{Proof of the moment bound}\label{appsec:moments}
%=============================================================================
In this section we state and prove \cref{lem:integral_bounds}, which contains our moment bounds and implies \cref{lem:moment_bound} in the main text.
The $n$-qubit Hilbert space $(\mbb C^2)^{\ot n}$ carries a representation of the Lie algebra~$\so(2n)$.
The corresponding Lie group representation corresponds precisely to the matchgate action, so that \cref{eq:match_to_orth} holds.
We now recall some notions from representation theory (see, e.g.,~\cite{fulton2013representation} for a gentle introduction).
Define fermionic creation and annihilation operators by
\begin{align}
  a_i = \frac{\gamma_{2i-1} + i \gamma_{2i}}2, \quad
  a_i^\dagger = \frac{\gamma_{2i-1} - i \gamma_{2i}}2
  \quad \text{ for } i \in [n],
\end{align}
in terms of the Majorana fermion operators, which were defined in the main text as operators on $(\mbb C^2)^{\ot n}$.
In this way, we can identify $(\mbb C^2)^{\ot n}$ with the fermionic Fock space $\bigwedge \mbb C^n$.
Note that all-zero basis state $\ket0$ corresponds to the fermionic Fock vacuum, since we have $a_i \ket{0} = 0$ for $i\in[n]$.
Now, consider the following operators:
\begin{align}\label{eq:gengen}
   X'_{ij} = a_i^\dagger a_j - \frac12 \delta_{ij}, \quad
   Y'_{ij} = a_i^\dagger a_j^\dagger \;\; \text{(for $i<j$)}, \quad
   Z'_{ij} = a_i a_j \;\; \text{(for $i<j$)}.
\end{align}
Here we follow the notation and conventions of~\cite{fulton2013representation} (and caution that these operators are not the Pauli matrices).
The operators defined in \cref{eq:gengen} satisfy the commutation relations of $\so(2n)$, and hence define a representation of~$\so(2n)$ on the $n$-qubit Hilbert space.
This representation generates the matchgate group.

Any irreducible representation of $\so(2n)$ is classified by its highest weight.
A \emph{weight vector} is a joint eigenvector of the operators $ H'_i := X'_{ii}$ for $i\in[n]$; and the vector of eigenvalues is simply called the \emph{weight}.
Note that, as an operator on the $n$-qubit Hilbert space, $ H'_i$ is nothing but $-\frac12Z_i$ ($Z_i$ is the Pauli $Z$-matrix acting on the $i$-th qubit).
Moreover, a \emph{highest weight vector} is a weight vector that is annihilated by the operators $\{ X'_{ij}\}_{i<j}$ and $\{ Y'_{ij}\}_{i<j}$.
Any irreducible representation contains a unique (up to phase) highest weight vector, and its weight, called the \emph{highest weight}, characterizes the representation completely.
It is well-known that the representation of $\so(2n)$ on $(\mbb C^2)^{\ot n}$ defined above decomposes into two irreducible representations, with highest weights
\begin{align}
  \alpha := (\tfrac12,\dots,\tfrac12,\tfrac12) \quad\text{and}\quad
  \beta := (\tfrac12,\dots,\tfrac12,-\tfrac12) \in (\mbb Z/2)^n.
\end{align}
If $n$ is even, the subrepresentation with highest weight~$\alpha$ is the even particle number subspace of $(\mbb C^2)^{\ot n}$, and the subrepresentation with highest weight~$\beta$ is the odd particle number subspace.
If $n$ is odd, then the reverse is true: $\beta$ corresponds to the even particle number subspace, and $\alpha$ corresponds  to the odd particle number subspace.
See~\cite[Prop.~20.15]{fulton2013representation}.
Indeed, the all-one state $\ket1^{\ot n}$ is a highest weight vector of weight $\alpha$ (but its parity depends on the parity of $n$), while the state $X_n \ket 1^{\ot n}$ is a highest weight vector of weight $\beta$.

In our analysis, it will be useful to instead consider the \emph{lowest weight vectors}, which are the weight vectors that are annihilated by the $\{ X'_{ij}\}_{i>j}$ and $\{ Z'_{ij}\}_{i<j}$.
Just like the highest weight vectors, they characterize the irreducible representation uniquely.
Clearly, both the all-zero state $\ket0$ and the state $\ket e = a_n^\dagger \ket0 = X_n \ket 0$ are lowest weight vectors.
We can compute the \emph{highest} weight of the corresponding representation by observing that $\ket0$ is the vacuum (an even particle number state), while $\ket e$ is a single-particle state (an odd particle number state).
We summarize: % our discussion in the following lemma:

\begin{lemma}\label{lem:correspondence}
The vectors $\ket 0$ and $\ket e = a^\dagger_n \ket 0 = X_n \ket 0$ are lowest weight vectors in $(\mbb C^2)^{\ot n}$.
For even $n$, $\ket0$ is contained in an irreducible subrepresentation with highest weight $\alpha$ and $\ket e$ is contained in an irreducible subrepresentation with highest weight $\beta$.
For odd $n$, the opposite holds: $\ket0$ corresponds to highest weight $\beta$ and $\ket1$ to highest weight~$\alpha$.
\end{lemma}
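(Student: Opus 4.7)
The plan splits into two parts: verifying that $\ket 0$ and $\ket e$ are lowest weight vectors, and then matching each to the correct irreducible subrepresentation by parity of particle number.

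For the lowest-weight verification I would proceed by direct calculation. Since $\ket 0$ is the Fock vacuum we have $a_i \ket 0 = 0$ for all $i \in [n]$, so both $Z'_{ij} = a_i a_j$ (for $i < j$) and $X'_{ij} = a_i^\dagger a_j$ (for $i > j$) annihilate $\ket 0$ immediately, as the rightmost annihilation operator kills the vacuum. For $\ket e = a_n^\dagger \ket 0$, I would use the canonical anticommutation relations $\{a_i, a_j^\dagger\} = \delta_{ij}$ to push each annihilation operator past $a_n^\dagger$. For $Z'_{ij}$ with $i < j$, the only non-trivial case is $j = n$, which reduces to $a_i \ket 0 = 0$. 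For $X'_{ij}$ with $i > j$, the surviving term is proportional to $\delta_{jn}$; but $i > j$ together with $i \le n$ forces $j < n$, so this term vanishes. Hence both vectors satisfy the lowest-weight conditions.

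For the irreducible-summand identification, I would directly invoke the parity-to-highest-weight correspondence established in the paragraph preceding the lemma (following~\cite[Prop.~20.15]{fulton2013representation}): the two irreducible components of $(\mbb C^2)\tn n$ under $\so(2n)$ are exactly the even- and odd-particle-number subspaces, with the assignment to highest weights $\alpha$ and $\beta$ depending on the parity of $n$ as stated in the excerpt. Since $\ket 0$ has particle number $0$ (even) and $\ket e = a_n^\dagger \ket 0$ has particle number $1$ (odd), matching parities yields the four claimed assignments at once.

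The main obstacle is simply keeping sign and parity conventions consistent; the computations themselves are short. As a cross-check I would compute the weights of $\ket 0$ and $\ket e$ under $H'_i = a_i^\dagger a_i - \tfrac12$, obtaining $-\alpha = (-\tfrac12,\dots,-\tfrac12)$ and $-\beta = (-\tfrac12,\dots,-\tfrac12,\tfrac12)$ respectively, and verify that the longest element of the $D_n$ Weyl group (which acts as $-\id$ when $n$ is even but swaps the two spinor highest weights when $n$ is odd) maps these lowest weights back to the expected highest weights, confirming the parity-based identification.
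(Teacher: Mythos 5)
Your proposal is correct and follows essentially the same route as the paper: identify $\ket0$ and $\ket e$ as lowest weight vectors via the CAR, match them to irreducible summands by fermion-number parity (citing the same \cite[Prop.~20.15]{fulton2013representation} correspondence), and cross-check via the action of the longest Weyl element on the lowest weights. The only difference is that you spell out the annihilation computations and the $D_n$ Weyl-group action more explicitly than the paper, which simply asserts the lowest-weight property as clear and records the dominant Weyl orbit representatives of $(-\tfrac12,\dots,-\tfrac12)$ and $(-\tfrac12,\dots,-\tfrac12,\tfrac12)$.
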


In general, the highest weight can be obtained from the lowest weight by the action of the Weyl group, which can permute the entries of the weight as well as swap an even number of signs, until we obtain a weight $\omega \in (\mbb Z/2)^n$ that satisfies $\omega_1 \geq \dots \geq \omega_{n-1} \geq \lvert \omega_n\rvert$.
For $\ket0$, the weight is $(-\tfrac12,\dots,-\frac12,-\frac12)$, so we obtain $\alpha$ if $n$ is even and $\beta$ if $n$ is odd.
For $\ket e$, the weight is $(-\tfrac12,\dots,-\frac12,\frac12)$, so we obtain $\beta$ if $n$ is even and $\alpha$ if $n$ is odd.

It is well-known and easy to see that the tensor product of highest weight vectors is again a highest weight vector, with associated highest weight the sum of the highest weights of the individual tensor factors (see~\cite[Obs.~13.2]{fulton2013representation}).
The same is true for lowest weight vectors.
Accordingly, for any $t$ and $m\in[t]$, we can define the following lowest weight vector:
\begin{align}\label{eq:when they go low}
  \ket {\Omega^t_m} := \ket 0^{\ot m} \ot \ket e^{\ot t-m}.
\end{align}
The corresponding highest weight $\lambda_m^t \in (\mbb Z/2)^n$, computed as described above, is the following:
\begin{equation}\label{eq:we go high}
  \lambda_m^t := \left\{\begin{array}{cc}
    m\alpha + (t-m)\beta = (\frac t2,\cdots, \frac t2, m -\frac t2)& \text{ if $n$ is even,} \\
    m\beta + (t-m)\alpha=(\frac t2,\cdots, \frac t2, \frac t2-m) & \text{ if $n$ is odd.}
\end{array} \right.
\end{equation}
These arguments will crucially feature in our moment bound.
Indeed, the computation of the $2t$-th moment can be reduced to an integral involving the vectors $\ket{\Omega^t_m}$ for $m\in[t]$.
Since each $\ket{\Omega^t_m}$ is a lowest weight vector, it is supported in a \emph{single} irreducible representation.
Therefore, we can use powerful tools from representation theory such as Schur's lemma and the Weyl dimension formula to compute the corresponding integrals.

Before we can state our moment calculations, we must discuss one subtlety.
Not every representation of the Lie algebra $\so(2n)$ integrates to a representation of the Lie group~$\SO(2n)$.
In particular, this problem occurs for the representation on $\so(2n)$ on $(\mbb C^2)^{\ot n}$ discussed above -- meaning that the unitaries $U(R)$ for $R\in\SO(2n)$ do \emph{not} define a representation of~$\SO(2n)$.
So far, this was not important for our analysis, since the conjugation with $U(R)$ as in \cref{eq:match_to_orth} gives a well-defined representation of $\SO(2n)$.
However, it will be necessary to be mindful of this subtlety in what follows.
Fortunately, for any representation of $\so(2n)$ we always have a corresponding representation of the Lie group~$\Spin(2n)$, which has the same Lie algebra but is a simply connected double cover of $\SO(2n)$.
We denote the $\Spin(2n)$-representation obtained in this way from the $\so(2n)$-representation on $(\mbb C^2)^{\ot n}$ by $\rho(R)$ for~$R\in\Spin(2n)$.
The matchgate group~$\mathcal{M}_n$ is nothing but the image of $\Spin(2n)$ under this representation.

\begin{lemma}
For integers $t$ and complex numbers $x, y$ we have
\begin{align}\label{eq:main_moment}
&\int_{\SO(2n)} dR \,\, \left|\bra 0 (\overline x+ \overline y a_n ) U(R) (x+ya_n^\dagger) \ket 0\right|^{2t}\notag \\ &\hspace{1em}=(n-1)!^2\sum_{m=0}^t  |x|^{4m}|y|^{4(t-m)}  \binom{t}{m}^2 \prod_{i=1}^t \frac{(i+n-1)!}{\sqrt{i!(2n-2+i)!}}\sqrt{\binom{n+t/2-1}{t/2}}\frac{m!}{(m+n-1)!}\frac{(t-m)!}{(t-m+n-1)!}.
\end{align}
\end{lemma}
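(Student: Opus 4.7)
The strategy is to reduce the integral to a weighted sum over the irreducible components of $V^{\otimes t}$ (with $V=(\mathbb{C}^2)^{\otimes n}$ the spin representation space) and then apply Schur orthogonality. Writing $\ket\psi = x\ket 0 + y\ket e$ with $\ket e = X_n\ket 0$, the integrand is $|\langle\psi^{\otimes t}| U(R)^{\otimes t}|\psi^{\otimes t}\rangle|^2$; the $\pm 1$ sign ambiguity of the spin representation is killed by the absolute value, so the integral lifts unchanged from $\SO(2n)$ to $\Spin(2n)$ with its normalized Haar measure.

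Expanding binomially,
\[
  \ket{\psi^{\otimes t}} = \sum_{s\in\{0,1\}^t} x^{m(s)} y^{t-m(s)} \ket{\phi(s)},
\]
where $m(s)$ counts the zeros in $s$ and $\ket{\phi(s)}$ is the ordered tensor of $\ket 0$'s and $\ket e$'s prescribed by $s$. By \cref{lem:correspondence} and the discussion around \cref{eq:when they go low,eq:we go high}, each $\ket{\phi(s)}$ is a lowest weight vector generating an irreducible subrepresentation of highest weight $\lambda^t_{m(s)}$, and these irreducibles are pairwise non-isomorphic as $m$ ranges over $\{0,\ldots,t\}$ (their highest weights differ in the last coordinate). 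The key representation-theoretic step is to show that for fixed $m$ the $\binom{t}{m}$ orderings give \emph{mutually orthogonal} copies of $V_{\lambda^t_m}$ inside $V^{\otimes t}$: using the isotypic decomposition $V^{\otimes t}|_{\lambda^t_m} \cong V_{\lambda^t_m} \otimes M_{\lambda^t_m}$ and fixing a unit lowest weight vector $v_{\min}\in V_{\lambda^t_m}$, one writes $\ket{\phi(s)} = v_{\min}\otimes m_s$ for some $m_s\in M_{\lambda^t_m}$; computational-basis orthonormality of $\{\ket{\phi(s)}\}_{m(s)=m}$ then transfers to orthonormality of $\{m_s\}_{m(s)=m}$.

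It follows that $\langle\phi(s)|U(R)^{\otimes t}|\phi(s')\rangle = \delta_{s,s'}\,\chi_m(R)$, where $\chi_m(R) := \langle v_{\min}|U(R)^{\otimes t}|v_{\min}\rangle$, with cross-$m$ matrix elements vanishing by Schur's lemma. Collecting,
\[
  \langle\psi^{\otimes t}|U(R)^{\otimes t}|\psi^{\otimes t}\rangle = \sum_{m=0}^t \binom{t}{m}|x|^{2m}|y|^{2(t-m)}\,\chi_m(R),
\]
and applying Schur orthogonality $\int dR\,\chi_m(R)\overline{\chi_{m'}(R)} = \delta_{m,m'}/\dim V_{\lambda^t_m}$ yields
\[
  \int_{\SO(2n)} dR\,|\langle\psi|U(R)|\psi\rangle|^{2t} = \sum_{m=0}^t \binom{t}{m}^2 |x|^{4m}|y|^{4(t-m)}\frac{1}{\dim V_{\lambda^t_m}}.
\]

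Finally, I would compute $\dim V_{\lambda^t_m}$ by the Weyl dimension formula for $\so(2n)$. With $\lambda^t_m = (\tfrac t2,\ldots,\tfrac t2,\pm(m-\tfrac t2))$ and Weyl vector $(n-1,\ldots,0)$, the product over positive roots splits into contributions from root pairs $(i,j)$ with $j<n$ (giving an $m$-independent factor) and from pairs $(i,n)$ (which telescope into $\binom{m+n-1}{n-1}\binom{t-m+n-1}{n-1}$, producing the $\tfrac{m!(t-m)!}{(m+n-1)!(t-m+n-1)!}$ dependence on $m$). The main anticipated obstacle is the algebraic rearrangement of the $m$-independent factor into the particular product form $(n-1)!^2 \prod_{i=1}^t \tfrac{(i+n-1)!}{\sqrt{i!(2n-2+i)!}}\sqrt{\binom{n+t/2-1}{t/2}}$ stated on the right-hand side; this is routine but delicate factorial bookkeeping, perhaps cleanest to verify by induction on $t$.
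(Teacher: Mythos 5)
Your proposal is correct and rests on the same three pillars as the paper's proof: reduce to lowest weight vectors in $V^{\otimes t}$, apply Schur orthogonality to obtain $\sum_{m}\binom{t}{m}^2|x|^{4m}|y|^{4(t-m)}/\dim V_{\lambda^t_m}$, and then evaluate dimensions by the Weyl formula. The only genuine organizational difference is in how you kill the off-diagonal contributions. The paper first invokes fermion parity conservation, which immediately gives $\bra 0\rho(R)\ket e = 0$, so the integrand collapses pointwise to $\bigl(|x|^2\bra 0\rho(R)\ket 0 + |y|^2\bra e\rho(R)\ket e\bigr)^t$; the binomial theorem then produces the $\binom{t}{m}$ factor and a \emph{single} ordered representative $\ket{\Omega^t_m}$ per $m$, with no need to discuss multiplicity spaces at all. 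You instead expand $\ket{\psi^{\otimes t}}$ over all $2^t$ binary strings and appeal to the isotypic decomposition $V^{\otimes t}|_{\lambda^t_m}\cong V_{\lambda^t_m}\otimes M_{\lambda^t_m}$ together with orthonormality of the multiplicity-space vectors $m_s$ to show matrix elements between distinct orderings vanish. This is correct — since the $\phi(s)$ are product lowest weight vectors, their orthonormality transfers to $M_{\lambda^t_m}$, and $U(R)^{\otimes t}$ acts as $\rho_{\lambda}\otimes\id$ on each isotypic block — but it is a slightly heavier route to the same intermediate identity; the parity observation is what makes the paper's version so short. One small point worth sharpening in your write-up: when you assert that the $\lambda^t_m$ are pairwise distinct because ``their highest weights differ in the last coordinate,'' you should note that for fixed $n$ the last coordinate is a fixed affine-linear (hence injective) function of $m$ as in the paper's \cref{eq:we go high}, since otherwise a reader might worry about the sign ambiguity from the Weyl-group reflection. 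The concluding Weyl-dimension bookkeeping is correctly diagnosed but not carried out; the paper does it by direct manipulation rather than induction on $t$, and both are fine.
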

\begin{proof}
We first write this as a Haar integral over the spin group, which does not change the value as explained above.
\begin{align}
  \int_{\SO(2n)} dR \,\, \left|\bra 0 (\overline x+ \overline y a_n ) U(R) (x+ya_n^\dagger) \ket 0\right|^{2t}
= \int_{\Spin(2n)} dR \,\, \left|\bra 0 (\overline x+ \overline y a_n ) \rho(R) (x+ya_n^\dagger) \ket 0\right|^{2t}.
\end{align}
Recall that $\ket e = a_n^\dagger \ket 0$.
The Gaussian fermionic unitaries preserve the parity of fermions, therefore,
\begin{align}\label{eq:dskfh}
\int_{\SO(2n)} dR \,\, &\left|\bra 0 (\overline x+\overline y a_n ) \rho(R) (x+ya_n^\dagger) \ket 0\right|^{2t}\\
&=\int_{\SO(2n)} dR \,\, \left||x|^2 \bra 0  \rho(R) \ket 0 + |y|^2 \bra e  \rho(R) \ket e \right|^{2t}\\
&=\int_{\SO(2n)} dR \,\, \left| \sum_{m=0}^t \binom{t}{m}|x|^{2m}|y|^{2(t-m)} \bra 0  \rho(R) \ket 0^m \bra e  \rho(R) \ket e^{t-m} \right|^2 \\
&=\!\!\sum_{m,m'=0}^{t,t}\binom{t}{m}\binom{t}{m'} |x|^{2m+2m'}|y|^{2(t-m)+2(t-m')}\bigg(\int_{\SO(2n)} dR \, \bra 0  \rho(R) \ket 0^m  \bra 0  \rho(R)^\dagger \ket 0^{m'} \\
&\hspace{27em}\times \bra e  \rho(R) \ket e^{t-m} \bra e  \rho(R)^\dagger \ket e^{t-m'} \bigg),
\end{align}
where the first equality  follows from preserving the parity. The rest are simple algebraic manipulations.
Now recall that we defined $\ket {\Omega^t_m} := \ket 0^{\ot m} \ot \ket e^{\ot t-m}$ in \cref{eq:when they go low}.
Thus, \cref{eq:dskfh} can be written as
\begin{align}
\int_{\SO(2n)}& dR \,\, \left|\bra 0 (\overline x+\overline y a_n ) \rho(R) (x+ya_n^\dagger) \ket 0\right|^{2t} \notag\\
&=\!\!\!\!\sum_{m,m'=0}^{t,t}\binom{t}{m}\binom{t}{m'} |x|^{2m+2m'}|y|^{2(t-m)+2(t-m')}\left(\int_{\Spin(2n)} dR \, \bra{\Omega^t_m} \rho^{\ot t}(R) \ket{\Omega^t_m} \bra{\Omega^t_{m'}} \rho^{\ot t}(R)^\dagger \ket{\Omega^t_{m'}}\right).
\end{align}
Now, because $\ket {\Omega^t_m}$ is a lowest weight vector, its support is limited to the irreducible representation of the $\text{so}(2n)$ Lie algebra given by the corresponding highest weight $\lambda_m^t$ as given in \cref{eq:we go high}. We denote this irreducible representation by $\rho_{\lambda_m^t}$.
Therefore, the above relation can be re-written as
\begin{align}
\int_{\SO(2n)}& dR \,\, \left|\bra 0 (\overline x+\overline y a_n ) \rho(R) (x+ya_n^\dagger) \ket 0\right|^{2t} \notag\\
&=\!\!\!\!\sum_{m,m'=0}^{t,t}\binom{t}{m}\binom{t}{m'} |x|^{2m+2m'}|y|^{2(t-m)+2(t-m')}\left(\int_{\Spin(2n)} dR \, \bra{\Omega^t_m} \rho_{\lambda_m^t}(R) \ket{\Omega^t_m} \bra{\Omega^t_{m'}} \rho_{\lambda_{m'}}(R)^\dagger \ket{\Omega^t_{m'}}\right).
\end{align}
Let us focus on the integral in the parenthesis.
As a result of the Schur orthogonality relations, it is straightforward to see that the Haar integral $\int{dR\rho_{\lambda_m^t}(R) \ket{\Omega^t_m} \bra{\Omega^t_{m'}} \rho_{\lambda_{m'}}(R)^\dagger }$ vanishes if $m\neq m'$. Furthermore, if $m=m'$, then
\begin{equation}
\int_{\Spin(2n)} {dR\,\rho_{\lambda_m^t}(R) \ket{\Omega^t_m} \bra{\Omega^t_{m}} \rho_{\lambda_{m}}(R)^\dagger }\propto \rho_{\lambda_{m}}(\text{Id}) ,
\end{equation}
as a consequence of Schur's lemma. The proportionality constant can be calculated by comparing the traces of the both side of the equality, and it is equal to $1/\dim ({\rho_{\lambda_m^t}})$. Hence, we obtain,
\begin{equation}
\int_{\Spin(2n)} dR \,\, \left|\bra 0 (\overline x+ \overline y a_n ) \rho(R) (x+ya_n^\dagger) \ket 0\right|^{2t} = \sum_{m=0}^t \binom{t}{m}^2 |x|^{4m}|y|^{4(t-m)}\frac{1}{\dim ({\rho_{\lambda_m^t}})}.
\end{equation}
The dimension of~$\rho_{\lambda_m^t}$ can be directly calculated using explicit relations for the dimension of the irreducible representations of $\so(2n)$. For an arbitrary highest weight $\mu = (\mu_1,\mu_2,\cdots,\mu_n)$, the dimension of the corresponding irreducible representation is given by one of Weyl dimension formulas~\cite[Eq.~24.41]{fulton2013representation}:
\[\dim{(\rho_{\mu})} = \prod_{1\leq i<j\leq n} \frac{l_i^2-l_j^2}{m_i^2-m_j^2},\quad\text{with }m_i:= n-i, \text{ and }l_i:=\mu_i+n-i.\]
After a few lines of algebra, this leads to,
\begin{equation}
\dim{(\rho_{\lambda_m^t})} =\left( \prod_{1\leq i< j \leq n-1}\frac{t+i+j}{i+j}\right)\times \frac{1}{(n-1)!^2} \times \frac{(m+n-1)!}{m!}\times \frac{(t-m+n-1)!}{(t-m)!}.
\end{equation}
We focus on the term inside the parentheses. By some manipulation of the product factors and the definition of the binomial we see:
\begin{align}
\prod_{1\leq i< j \leq n-1}\frac{t+i+j}{i+j} &= \sqrt{ \prod_{1\leq i< j \leq n-1}\frac{t+i+j}{i+j} \times \prod_{1\leq j< i \leq n-1}\frac{t+i+j}{i+j} }\\
&= \sqrt{ \prod_{\substack{i,j=1\\i\neq j}}^{n-1}\frac{t+i+j}{i+j}}\\
&=\sqrt{\prod_{i=1}^{n-1}\frac{i}{i+\frac t2}}\times \sqrt{\prod_{i,j=1}^{ n-1}\frac{t+i+j}{i+j}}\\
&=\binom{\frac t2+n-1}{\frac t2}^{-1/2}\times \sqrt{\prod_{i,j=1}^{ n-1}\frac{t+i+j}{i+j}}.
\end{align}
Furthermore, we see that
\begin{align}
    \prod_{i,j=1}^{ n-1}\frac{t+i+j}{i+j} &= \prod_{i=1}^{n-1}\left(\frac{\prod_{j=1}^{n-t-1} (t+i+j)}{\prod_{j=t+1}^{n-1} (i+j)}\times \frac{\prod_{j=n-t}^{n-1} (t+i+j)}{\prod_{j=1}^{t} (i+j)} \right)
\end{align}
which is nothing more than splitting the products over the index $j$ into two components set by $t$. Working this out further we get
\begin{align}
    \prod_{i,j=1}^{ n-1}\frac{t+i+j}{i+j}     &=\prod_{i=1}^{n-1}\left(\frac{\prod_{j=1}^{n-t-1} (t+i+j)}{\prod_{j=1}^{n-1-t} (t+i+j)}\times \frac{\prod_{j=1}^{t} (n-1+i+j)}{\prod_{j=1}^{t} (i+j)} \right)\\
     &=\prod_{i=1}^{n-1}\left( \frac{\prod_{j=1}^{t} (n-1+i+j)}{\prod_{j=1}^{t} (i+j)} \right)\\
     &=\prod_{i=1}^{t}\left( \frac{\prod_{j=1}^{n-1} (n-1+i+j)}{\prod_{j=1}^{n-1} (i+j)} \right)\\
     &=\prod_{i=1}^{t} i! \frac{(2n-2+i)!}{(i+n-1)!^2}.
\end{align}
Combining all of these relations, we have:
\begin{align}
\int_{\Spin(2n)}& dR \,\, \left|\bra 0 (\overline x+ \overline y a_n ) \rho(R) (x+ya_n^\dagger) \ket 0\right|^{2t} \\ &=(n-1)!^2\sum_{m=0}^t  |x|^{4m}|y|^{4(t-m)}  \binom{t}{m}^2 \prod_{i=1}^t \frac{(i+n-1)!}{\sqrt{i!(2n-2+i)!}}\sqrt{\binom{n+t/2-1}{t/2}}\frac{m!}{(m+n-1)!}\frac{(t-m)!}{(t-m+n-1)!}.\notag
\end{align}
\end{proof}
Lastly, we prove the main result of this section.
\begin{theorem}\label{lem:integral_bounds}
Let $\ket{0}$ be the all-zero state and define the state $\ket{e} = X_n\ket{0} = a_n\ct \ket{0}$. We have, for any fixed $t$:
\begin{align}
&\int_{\SO(2n)} dR \,\, \left|\frac{\bra{0}+\bra{e}}{\sqrt{2}} U(R) \frac{\ket{0}+\ket{e}}{\sqrt{2}}\right|^{2t} = 2^{-tn} \, O(\mathrm{poly}(n)),\\
&\int_{\SO(2n)} dR \,\, \left|\bra 0  U(R)  \ket 0\right|^{2t} = 2^{-tn} \, O(\mathrm{poly}(n)),\label{eq:secondone}\\
&\int_{\SO(2n)} dR \,\, \left|\bra e U(R)   \ket e\right|^{2t} = 2^{-tn} \, O(\mathrm{poly}(n)).\label{eq:thirdone}
\end{align}
\end{theorem}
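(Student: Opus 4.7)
The plan is to derive each of the three moment bounds as an immediate consequence of the closed-form formula in \cref{eq:main_moment} by specializing the parameters $(x,y)$. Noting that $\bra0(\overline x+\overline y a_n) = \overline x \bra 0 + \overline y \bra e$ and $(x+ya_n^\dagger)\ket0 = x\ket0 + y\ket e$, the choice $(x,y)=(1/\sqrt 2, 1/\sqrt 2)$ recovers the integrand of the first statement verbatim; the choice $(x,y)=(1,0)$ kills every summand in \cref{eq:main_moment} except $m=t$, producing the second integral; and $(x,y)=(0,1)$ kills every summand except $m=0$, producing the third. After this reduction, what remains is purely an asymptotic estimation of the right-hand side of \cref{eq:main_moment} for fixed $t$ as $n\to\infty$.

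The next step is to identify the factor $\prod_{i=1}^t (i+n-1)!/\sqrt{i!\,(2n-2+i)!}$ as the sole source of exponential decay. Using the identity $k!^2/(2k)! = 1/\binom{2k}{k}$ together with the standard asymptotic $\binom{2k}{k}\sim 4^k/\sqrt{\pi k}$ (or a direct Stirling expansion), one obtains $(n-1+i)!/\sqrt{(2n-2+i)!} = 2^{-n}\cdot O(n^{(2i+1)/4})$ for each fixed $i$. Multiplying these estimates over $i=1,\dots,t$ produces an overall factor of $2^{-tn}$ times a polynomial in $n$; the extra $1/\sqrt{i!}$ contributes only a constant for fixed $t$.

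All remaining ingredients in \cref{eq:main_moment} are polynomially bounded in $n$ for fixed $t$. The group $(n-1)!^2\, m!(t-m)!/\bigl((m+n-1)!(t-m+n-1)!\bigr)$ equals $\binom{m+n-1}{m}^{-1}\binom{t-m+n-1}{t-m}^{-1}$, which is $\Theta(n^{-t})$; the factor $\sqrt{\binom{n+t/2-1}{t/2}}$ is $O(n^{t/4})$; and both $\binom{t}{m}^2$ and $|x|^{4m}|y|^{4(t-m)}$ are $O(1)$ (the latter since $|x|,|y|\leq 1$ in all three cases). Since the outer sum has only $t+1$ terms, a constant for fixed $t$, assembling these bounds and taking a maximum over $m$ yields a total bound of $2^{-tn}\cdot O(\mathrm{poly}(n))$, which is exactly the claim.

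The main obstacle is purely bookkeeping the Stirling-type estimates so that the polynomial prefactors are tracked consistently and one can see that the $2^{-tn}$ scaling emerges solely from the product $\prod_{i=1}^t (i+n-1)!/\sqrt{(2n-2+i)!}$ rather than being obscured by competing contributions from the dimension factor $\sqrt{\binom{n+t/2-1}{t/2}}$ or the inverse binomials. No new conceptual ingredient is required beyond the explicit formula from the preceding lemma, so the remainder is routine asymptotic analysis.
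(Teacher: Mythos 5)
Your proof is correct and follows the same route as the paper: specialize $(x,y)$ to $(1/\sqrt 2,1/\sqrt 2)$, $(1,0)$, and $(0,1)$ in the moment formula of \cref{eq:main_moment}, then show the $2^{-tn}$ scaling arises solely from the product $\prod_{i=1}^t (i+n-1)!/\sqrt{i!\,(2n-2+i)!}$ via Stirling, with all other factors polynomially bounded in $n$ for fixed $t$. The only cosmetic difference is that for the first integral the paper first collapses the sum over $m$ using the Vandermonde identity $\sum_{m=0}^t\binom{t+n-1}{m}\binom{t+n-1}{t-m}=\binom{2(t+n-1)}{t}$ to obtain a closed-form product before estimating, whereas you bound each of the $t+1$ summands individually and take a maximum; since the number of summands is $O(1)$ for fixed $t$, both are equally valid and converge to the same Stirling analysis.
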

\begin{proof}
We note that the first integral is of the form \cref{eq:main_moment} with $x=\frac1{\sqrt 2}$ and $y=\frac 1{\sqrt 2}$:
\begin{align}
\int_{\SO(2n)} dR \,\, &\left|\frac{\bra{0}+\bra{e}}{\sqrt{2}} U(R) \frac{\ket{0}+\ket{e}}{\sqrt{2}}\right|^{2t} \\
&=4^{-t}(n-1)!^2\sum_{m=0}^t  \binom{t}{m}^2 \prod_{i=1}^t \frac{(i+n-1)!}{\sqrt{i!(2n-2+i)!}}\sqrt{\binom{n+t/2-1}{t/2}}\frac{m!}{(m+n-1)!}\frac{(t-m)!}{(t-m+n-1)!}\\
&=4^{-t}(n-1)!^2t!^2 \prod_{i=1}^t \frac{(i+n-1)!}{\sqrt{i!(2n-2+i)!}}\sqrt{\binom{n+t/2-1}{t/2}}\sum_{m=0}^t  \frac1{m!(t-m)!}\frac{1}{(m+n-1)!}\frac{1}{(t-m+n-1)!}\notag\\
&=\frac{4^{-t}(n-1)!^2t!^2}{(t+n-1)!^2} \prod_{i=1}^t \frac{(i+n-1)!}{\sqrt{i!(2n-2+i)!}}\sqrt{\binom{n+t/2-1}{t/2}}\sum_{m=0}^t  \binom{t+n-1}{m}\binom{t+n-1}{t-m}.
\end{align}
Now, we can use the binomial identity
\begin{equation}
\sum_{m=0}^t  \binom{t+n-1}{m}\binom{t+n-1}{t-m} = \binom{2(t+n-1)}{t}.
\end{equation}
Hence, we have
\begin{align}\label{eq:mid_bound}
    \int_{\SO(2n)} dR \,\, &\left|\frac{\bra{0}+\bra{e}}{\sqrt{2}} U(R) \frac{\ket{0}+\ket{e}}{\sqrt{2}}\right|^{2t} \\
    &=4^{-t} \prod_{i=1}^t \frac{(i+n-1)!}{\sqrt{i!(2n-2+i)!}}\sqrt{\binom{n+t/2-1}{t/2}} \binom{2(t+n-1)}{t}\binom{t+n-1}{t}^{-2}\\
    &=4^{-t}  \left(\prod_{i=1}^t\binom{2n+2i-2}{n+i-1}^{-1/2}\right) \left(\prod_{i=1}^t\binom{2n+2i-2}{i}^{1/2}\right)\sqrt{\binom{n+t/2 -1 }{t/2}}\binom{2(t+n-1)}{t}\binom{t+n-1}{t}^{-2}.\label{eq:llll}
\end{align}
The last four terms in~\cref{eq:llll} (up to square roots) are polynomials in $n$. Therefore, we have
\begin{align}\label{eq:mid_bound2}
    \int_{\SO(2n)} dR \,\, \left|\frac{\bra{0}+\bra{e}}{\sqrt{2}} U(R) \frac{\ket{0}+\ket{e}}{\sqrt{2}}\right|^{2t}
    =4^{-t}  \left(\prod_{i=1}^t\binom{2n+2i-2}{n+i-1}^{-1/2}\right) \times O(\mathrm{poly}(n)).
\end{align}
% \mw{Why do we need the lower bounds in the following?}
Using Stirling's approximation $\sqrt{2\pi n} (n/e)^n e^{\frac 1{12n+1}}\leq n! \leq \sqrt{2\pi n} (n/e)^n e^{\frac 1{12n}}$, the following bound holds,
\begin{equation}\label{eq:centralbinom}
\frac{4^{n}}{\sqrt{\pi n}}\exp\left[-\frac1{8n}\right]\leq\frac{4^{n}}{\sqrt{\pi n}}\exp\left[\frac{1}{24n+1} \!-\! \frac{1}{6n}\right]\leq \binom{2n}{n}.
% \leq \frac{4^{n}}{\sqrt{\pi n}}\exp\left[\frac{1}{24n} \! -\!\frac{2}{12n+1}\right] \leq  \frac{4^{n}}{\sqrt{\pi n}}\exp\left[\frac{1}{72n^2}\!-\!\frac1{8n}\right]\!\!,\!\!\!\!
\end{equation}
We can apply this lower bound  to every factor of $\binom{2n+2i-2}{n+i-1}$ in the product factor of \cref{eq:mid_bound2} to obtain an upper bound of this factor:
\begin{align}
% 2^{-tn}2^{-t^2-2t}\pi^{t/4}\left(\frac{(n+t-1)!}{(n-1)!}\right)^{1/4}& \exp\left(\frac{t}{16(n+t)}\right)\\
% \leq  
4^{-t}&\left(\prod_{i=1}^t\binom{2n+2i-2}{n+i-1}^{-1/2}\right)
% &\hspace{2em}
\leq 2^{-tn}2^{-t^2-2t}\pi^{t/4}\left(\frac{(n+t-1)!}{(n-1)!}\right)^{1/4}\exp\left(\frac{t}{16n}\right).
\end{align}
Inserting this relation back into \cref{eq:mid_bound2}, we conclude that
\begin{equation}
 \int_{\SO(2n)} dR \,\, \left|\frac{\bra{0}+\bra{e}}{\sqrt{2}} U(R) \frac{\ket{0}+\ket{e}}{\sqrt{2}}\right|^{2t}  =2^{-tn}\time \times O(\mathrm{poly}(n)).
\end{equation}
\par
Next, we discuss~\cref{eq:secondone,eq:thirdone}. Using~\cref{eq:main_moment} for $x=0, y=1$ and $x=1, y=0$ we have
\begin{align}\label{eq:lastp}
    \int_{\SO(2n)} dR \,\, \left|\bra 0  U(R)  \ket 0\right|^{2t}=&\int_{\SO(2n)} dR \,\, \left|\bra e  U(R)  \ket e\right|^{2t}  \notag\\
    =&\left(\prod_{i=1}^t \frac{(i+n-1)!}{\sqrt{i!(2n-2+i)!}}\right)\sqrt{\binom{n+t/2-1}{t/2}}\frac{t!}{(t+n-1)!} \notag\\
    =&\left(\binom{2n-2}{n-1}^{-1/2}\right)^t \times \left(\prod_{i=1}^t \binom{i+n-1}{i}\binom{2n-2+i}{2n-2}^{-1/2} \right)\notag\\
    =&\left(\binom{2n-2}{n-1}^{-1/2}\right)^t \times O(\mathrm{poly}(n)).
\end{align}
% Again, we suppress the dependence of $O(\mathrm{poly}(n))$ on $t$, as we are solely interested in the dependence on $n$.
Incorporating~\cref{eq:centralbinom} into~\cref{eq:lastp} we immediately obtain the desired results~\cref{eq:secondone,eq:thirdone}.
\end{proof}

We note that \cref{lem:moment_bound} follows from \cref{lem:integral_bounds}.
For the all-plus state, this follows by Haar invariance and the fact that $\ket{+}^{\ot n} = U(R_+) \frac{1}{\sqrt{2}}(\ket{0}+\ket{e})$ for some $R_+\in \SO(2n)$.

%=============================================================================
\section{Derivation of decay model in the gate-independent noise case}\label{appsec:gate_indep}
%=============================================================================
In this section we provide an ``artisanal'' derivation of \cref{eq:fid} under the assumption of gate-independent noise.
We assume that there exists a quantum channel~$\Lambda$ such that every generalized matchgate~$U(Q)$ is implemented on the device as~$\Lambda \circ \omega(Q)$, where $\omega(Q)(\rho) = U(Q)\rho \,U(Q)\ct$ as defined above.
We note that this assumption is not very realistic and the decay model can be derived under much weaker conditions using the general arguments given in~\cite{framework} (specifically Theorem~$9$ therein). However this derivation has the benefit of making it more clear what all the moving parts are. We will give an argument that the gate-dependent noise assumption can be relaxed in the next section. \\

Throughout the derivation we will make use of the matrix-transfer representation, writing matrices $\rho,E$ as vectors $\ket{\rho},\bra{E}$ with trace inner product $\langle E|\rho\rangle = \tr(E\ct \rho)$. Correspondingly, superoperators $\Lambda$ get mapped to matrices acting as $\Lambda\ket{\rho} = \ket{\Lambda(\rho)}$. This representation maps composition to matrix multiplication and interplays correctly with tensor products. In this picture we can write the output of the matchgate benchmarking protocol as
\begin{align}
f_k(m) &= \int_{\OR(2n)} dQ_1\cdots Q_m\sum_{x\in \{0,1\}^{n}}\alpha_k(x,Q_1\cdots Q_m) \bra{\tilde{E}_x} \Lambda \omega(Q_m) \cdots \Lambda \omega(Q_1) \ket{\tilde{\rho}_0}\\
&= N_k^{-1}\int_{\OR(2n)} dQ_1\cdots Q_m\sum_{x\in \{0,1\}^{n}} \bra{E_x\otimes \tilde{E}_x}(P_k\otimes \Lambda)\omega(Q_m)\tn{2} (\id \otimes \Lambda) \cdots  (\id \otimes \Lambda) \omega(Q_1)\tn{2} \ket{\rho_0\otimes \tilde{\rho}_0},
\end{align}
with $\tilde{E}_x,\tilde{\rho}_0$ noisy versions of their ideal counterparts.  Using standard representation theory and the decomposition of \cref{lem:rep_th} we can express the integral $\int_{\OR(2n)} dQ \omega(Q)\tn{2}$ as a projector onto a space spanned by the vectors
\begin{equation}
\{ \ket{\mathrm{v}(P_{k'})}\;\;\;|\;\;k\in \{0,\ldots,2n\}\},
\end{equation}
where
\begin{equation}
\ket{\mathrm{v}(P_{k'})} := 2^{-n}\sum_{S\subset[2n],|S|=k'}\ket{\gamma[S]\otimes\gamma[S]},
\end{equation}
with $P_{k'} = 2^{-n}\sum_{S\subset[2n],|S|=k'}\ket{\gamma[S]}\bra{\gamma[S]}$ the projector onto the subrepresentation $\Gamma_k$.  Using linearity, and writing $|P_{k'}|$ for the dimension of $\Gamma_{k'}$, we can insert this into the expression for $f_k(m)$

\begin{align}
f_k(m) &= N_k^{-1} \sum_{x\in \{0,1\}^{n}} \bra{E_x\otimes \tilde{E}_x}
(P_k\otimes \Lambda)
\bigg[
\Big[
\sum_{k'=0}^{2n} \frac{1}{|P_{k'}|}\ket{\mathrm{v}(P_{k'})} \bra{\mathrm{v}(P_{k'})}
\Big]
(\id \otimes \Lambda)
\Big[
\sum_{k'=0}^{2n} \frac{1}{|P_{k'}|} \ket{\mathrm{v}(P_{k'})} \bra{\mathrm{v}(P_{k'})}
\Big]
\bigg]^m
\ket{\rho_0\otimes \tilde{\rho}_0}.\notag
\end{align}
% From now on we will assume $k<n$. The $k=n$ case is somewhat different and we treat it separately later.
Now using the property $A\otimes \id \ket{\mathrm{v}(B)} = \ket{\mathrm{v}(BA)}$ of the vectorization function and the orthogonality of the projectors $P_k$ we can rewrite this as
\begin{align}
f_k(m) &= N_k^{-1} \sum_{x\in \{0,1\}^{n}}
\bra{E_x\otimes \Lambda\ct(\tilde{E}_x)}
\frac{1}{|P_{k'}|}
\ket{\mathrm{v}(P_{k})}
\bigg[
\frac{1}{|P_{k'}|}\bra{\mathrm{v}(P_{k})} (\id \otimes \Lambda)\ket{\mathrm{v}(P_{k})}
\bigg]^m
\bra{\mathrm{v}(P_{k})}
\ket{\rho_0\otimes \tilde{\rho}_0}.
\end{align}
% Using another property of the vectorisation map, namely that $\bra{\mathrm{v}(A)}\ket{\mathrm{v}(B)} = \tr(A\ct B)$ (this being the trace of superoperators $A$ and $B$), we can define the matrices
% \begin{align}
% M_k &= \begin{pmatrix}\frac{1}{|P_{k}|}\tr(P_k\Lambda) & \frac{1}{|P_{k}|}\tr(T_k\Lambda)\\
% \frac{1}{|P_{k}|}\tr(T_{2n-k}\Lambda) & \frac{1}{|P_{k}|}\tr(P_{2n-k}\Lambda)\end{pmatrix}\\
% A_k^x &= \frac{1}{|P_{k}|}\bra{E_x\!\otimes\! \Lambda\ct(\tilde{E}_x)}\!\begin{pmatrix}( \ket{\mathrm{v}(P_k)}\bra{ \mathrm{v}(P_k)} +\ket{\mathrm{v}(T_{2n-k})}\bra{ \mathrm{v}(T_{2n-k})} )  &( \ket{\mathrm{v}(P_{2n-k})}\bra{ \mathrm{v}(T_k)} +\ket{\mathrm{v}(T_{2n-k})}\bra{ \mathrm{v}(P_{k})} )
% \\( \ket{\mathrm{v}(P_{k})}\bra{ \mathrm{v}(T_{2n-k})} +\ket{\mathrm{v}(T_{k})}\bra{ \mathrm{v}(P_{2n-k})} ) &( \ket{\mathrm{v}(P_{2n-k})}\bra{ \mathrm{v}(P_{2n-k})} +\ket{\mathrm{v}(T_{k})}\bra{ \mathrm{v}(T_{k})} )  \end{pmatrix}\!\ket{\rho_0\!\otimes\! \tilde{\rho}_0}
% \end{align}
% For $k=n$ we have $P_k = P_{2n-k}$ (and also $T_k = T_{2n-k}$), moreover $P_{2n-k}$
% and the resulting "matrices" are one-dimensional: $M_n = \frac{1}{|P_{n}|}\tr(P_n\Lambda)$, $A_n = \frac{1}{|P_{k}|}\bra{E_x\otimes \Lambda\ct(\tilde{E}_x)}\ket{\mathrm{v}(P_n)}\bra{ \mathrm{v}(P_n)}\ket{\rho_0\otimes \tilde{\rho}_0}$.
Defining
\begin{align}
\lambda_k &= \frac{1}{|P_k|}\tr(P_k\Lambda),\\
A_k &= \frac{1}{|P_{k}|}\langle E_x\!\otimes\! \Lambda\ct(\tilde{E}_x) \ket{\mathrm{v}(P_{k})} \bra{\mathrm{v}(P_{k})} \rho_0\otimes \tilde{\rho}_0 \rangle,
\end{align}
 the expression for $f_k(m)$ becomes
\begin{equation}
f_k(m) = N_k^{-1}\sum_{x\in \{0,1\}^n} A_k \lambda^m_k,
\end{equation}
with $A_k = N^{-1}_k\sum_x A^x_k$ which is the expression we want.\\

Let's now consider the relation between the parameters $\lambda_k$ and the average fidelity. For this is it is useful to remember that the average fidelity of a trace preserving quantum channel $\Lambda$ can be written as
\begin{equation}
F(\Lambda) = \frac{2^{-n}\tr(\Lambda)+1}{2^{n}+1}.
\end{equation}
We can explicitly compute this trace from knowledge of the parameters $\lambda_k$, by noting that the projectors $P_k$ form a resolution of the identity (for the space of superoperators), and hence by construction
\begin{equation}
\sum_{k=0}^{2n} \frac{|P_k|}{2^{2n}}\lambda_k = \frac{1}{2^{2n}}\sum_{k=0}^{2n} \tr(P_k \Lambda) = \frac{1}{2^{2n}} \tr(\Lambda).
\end{equation}
Inserting this into the above and working out we obtain \cref{eq:fid} in the main text. Finally we want to consider the parameter $A_k$. In order to fit quantities of the form $A_k\lambda_k$ it is important that $A_k$ is non-zero. To ensure this we evaluate $A_k$ in the noise free limit, setting $\{E_x\}$ the be the computational basis and $\rho_0$ to be the all-zero state. The parameter $A_k$ is composed of quantities of the form $\tr(\dens{x} P_k(\dens{x}))$. We begin by noting that
\begin{equation}
\tr(\dens{x}P_k(\dens{x})) = 2^{-2n}\sum_{\substack{S\subset [2n]\\|S|=k}} |\tr(\dens{x}\gamma[S])|^2 = 2^{-2n}\sum_{\substack{S\subset [2n]\\|S|=k}}  |\tr(\dens{0}\gamma[S])|^2,
\end{equation}
since $\bra{x} = \bra{0}X_x$ for a bit-flip Pauli operator $X_x$ and $X_x\gamma[S]X_x\ct = \pm \gamma[S]$. Note that $|\tr(\dens{0}\gamma[S])|$ is one when $\gamma[S]$ is an all-$Z$ Pauli operator and zero otherwise. From the definition of the Majorana operators one can see that $|\tr(\dens{0}\gamma[S])|$ is always zero if $k$ is odd and that for even $k$ there are $\binom{n}{ k/2}$ choices for $S$ such that $\gamma[S]$ is an all-$Z$ Pauli.
With all the above we can evaluate the $A_k$ parameters as
\begin{align}
A_k &= N_{k}^{-1} \sum_{x\in\{0,1\}^n} \tr(\dens{x}P_k(\dens{x}))\tr(\dens{0}P_k(\dens{0}))\\
&= N_k^{-1} 2^{-n}\frac{ \binom{n}{ k/2}^2}{\binom{2n}{ k}},
\end{align}
for $k$ even.
Hence if we want $A_k=1$ in the noise free limit the normalization in the even $k$, $Z$-basis SPAM case must be set to
\begin{equation}
N_k^{(Z)}=  2^{-n}\frac{ \binom{n}{ k/2}^2}{\binom{2n}{ k}},
\end{equation}
where we also used that $|P_k| = \binom{2n}{k}$. We can repeat this exercise for odd $k$ and $X$-basis SPAM. Here the key expression is
\begin{align}
\tr\big(H\tn{n}\dens{x}H\tn{n}P_k(H\tn{n}\dens{x}H\tn{n})\big) &= 2^{-n}\sum_{\substack{S\subset [2n]\\|S|=k}} |\tr\big(H\tn{n}\dens{x}H\tn{n}\gamma[S]\big)|^2\\
 &= 2^{-n} \frac{ \binom{2n}{(k-1)/2 }^2}{\binom{2n}{ k}},
\end{align}
for all $x \in \{0,1\}^n$. The factor $\binom{2n}{(k-1)/2 }$ is obtained by considering the number of sets $S$ for which $|\tr\big(H\tn{n}\dens{x}H\tn{n}\gamma[S]\big)|^2=1$. This can only happen if $\gamma[S]$ contains only $X$ and $I$ tensor factors. We know that $\gamma_1 =X_1$ and also that $\gamma_{2i}\gamma_{2i+1} =X_iX_{i+1}$ (up to a phase). Since we require $k$ to be odd the set $S$ must consist of $1$, and $(k-1/2)$ tuples $2i, 2i+1$ with $i\in [1:n]$. Hence there are $\binom{n}{(k-1)/2}$ possible choices.
Using this we can compute
\begin{align}
A_k &= N_{k}^{-1} \sum_{x\in\{0,1\}^n} \tr(H\tn{n}\dens{x}H\tn{n}P_k(H\tn{n}\dens{x}H\tn{n}))\tr(\dens{+}P_k(\dens{+}))\\
&= N_k^{-1} 2^{-n}\frac{ \binom{n}{ \lfloor k/2\rfloor}^2}{\binom{2n}{ k}}.
\end{align}
Again imposing the unit condition gives the required normalization.\\

\section{Gate-dependent noise}\label{appsec:gate_dep}
In this section we discuss what happens when the gate-independent noise assumption made above breaks down. While a detailed calculation is out of the scope of this paper, we aim to show here that the conclusions reached for randomized benchmarking with arbitrary finite groups~\cite{framework} essentially carry over to the case of the matchgate group (which is not finite, but is compact). Consider a general map $\phi: \mc{M}_n\to \mc{S}_d$ from the orthogonal matchgates to the space of superoperators which assigns to each orthogonal matchgate a quantum channel. This is a general model of gate-dependent noise. Note  that this is not the most general possible model as we are ignoring non-Markovian and time-dependent effects. Within this model we can write the output of the matchgate benchmarking protocol as
\begin{align}
f_k(m) &= \int_{\OR(2n)} dQ_1\cdots Q_m\sum_{x\in \{0,1\}^{n}}\alpha_k(x,Q_1\cdots Q_m) \bra{\tilde{E}_x}\phi(Q_m) \cdots \phi(Q_1) \ket{\tilde{\rho}_0}\\
&= N_k^{-1}\int_{\OR(2n)} dQ_1\cdots Q_m\sum_{x\in \{0,1\}^{n}} \bra{E_x\otimes \tilde{E}_x}(P_k\otimes \id)\omega(Q_m)\otimes \phi(Q_m)) \cdots \omega(Q_1)\otimes \phi(Q_1) \ket{\rho_0\otimes \tilde{\rho}_0},
\end{align}
with $\tilde{E}_x,\tilde{\rho}_0$ noisy versions of their ideal counterparts. Noting that $P_k $ commutes with $\omega(Q)$ and that $P_k \omega(Q) = \omega_k(Q)$ where $\omega_k$ is the relevant irreducible subrepresentation, we have
\begin{equation}
 f_k(m) = \sum_{x\in \{0,1\}^{n}} \bra{E_x\otimes \tilde{E}_x}  \left(\int_{\OR(2n)}dQ\, \omega_k(Q)\otimes\phi(Q)\right)^m\ket{\rho_0\otimes \tilde{\rho}_0}.
 \end{equation}
We can now consider the operator $\int_{\OR(2n)}dQ\, \omega_k(Q)\otimes\phi(Q)$ as a perturbation of a rank one projector. In that case $f_k(m)$ will be well-described by the $m$-fold power of the largest eigenvalue of $\int_{\OR(2n)}dQ\, \omega_k(Q)\otimes\phi(Q)$. A candidate rank-one projector is given by the ideal implementation $\int_{\OR(2n)}dQ \omega_k(Q)\otimes\phi(Q)$. By an application of Schur's lemma it can be seen that this operator is equal to $\ket{v(P_k)}\bra{v(P_k)}$, a fact we have used already in the gate-independent derivation. More concretely, assume for a matrix norm $\|\cdot\|$  on the space $L(\Gamma_k)\otimes \mc{S}_d$ that
\begin{equation}
\left\|\int_{\OR(2n)}dQ\, \omega_k(Q)\otimes\phi(Q) - \int_{\OR(2n)}dQ\, \omega_k(Q)\otimes \omega(Q)\right\| \leq \delta,
\end{equation}
for some $\delta>0$. From the perturbation theory of $m$-fold matrix powers it can then be concluded that (provided $\delta$ is small enough)
 \begin{equation}\label{eq:delta}
f_k(m) = A_k \lambda_k^m + O(\delta^m),
\end{equation}
where $\lambda_k$ is the largest eigenvalue of the operator $\int_{\OR(2n)}dQ\,\omega_k(Q)\otimes\phi(Q)$.
In practice this means that even moderate deviations from the gate-independence assumption get suppressed exponentially quickly in the sequence length $m$. Hence even when the gate-independent noise assumption is relaxed the data obtained from a matchgate benchmarking experiment will be well described by a single exponential decay. \\

From the argument above it is not clear how small $\delta$ must be chosen, and what a physically reasonable choice of submultiplicative norm is. In  \cite{kong2021framework} (following \cite{framework}) it was shown (as a straightforward consequence of their theorem 1) that \cref{eq:delta} holds for standard randomized benchmarking with a compact group $\mathbb{G}$ provided
\begin{equation}
 \int dg \;\|{\phi(g) - \omega(g)}\|_\diamond \leq \delta\leq 1/9,
\end{equation}
holds (where $\|\cdot\|_{\diamond}$ is the diamond norm) and the integral is taken over the Haar measure. It is important to note here that the factor $1/9$ is likely suboptimal.

We can get a crude, rule-of-thumb indication for the size of $\delta$ by considering the decomposition of general matchgates into two-qubit matchgates given in \cref{eq:hurwitz}. Assuming that the single qubit $Z$ rotations are noiseless and that the two qubit rotations have an average diamond error of $\Delta$ we see by the triangle inequality and the sub-multiplicativity of the diamond norm that
\begin{equation}
 \int dQ \;\|{\phi(Q) - \omega(Q)}\|_\diamond  \leq n(n-1)\Delta.
\end{equation}
We can further estimate this by assuming that $\Delta \approx 1-F_{\rm avg}$ where $F_{\rm avg}$ is the average fidelity of the average two qubit gate. This is of course not generally true, and more or less corresponds to a "decoherent noise" assumption. In \cite{abrams2019implementation} a median two qubit gate fidelity of $\approx 97\%$ was reported, which we can slot in to give $\delta \approx n(n-1) 0.03$. Hence $\delta \leq 1/9$ for two qubits, and $\delta \approx 0.6$ for five qubits. We emphasize that this is a very crude order of magnitude estimation (on top of a suboptimal perturbation bound) meant to justify that $\delta$ can be small compared to $\lambda_k$ in reasonable circumstances, and should not be seen as an upper bound on the tolerance to gate-dependent noise of our protocol (especially in the context of larger $n$, where the triangle inequality used above becomes quite wasteful).

%=============================================================================
\section{Computation of correlation functions \texorpdfstring{$\alpha_k$}{alpha\_k}}\label{appsec:corr_func}
%=============================================================================
We show explicitly how to compute relevant quantities efficiently (in $n$). In particular the correlation functions given in \cref{eq:corr_func}. For this we will use some computational techniques from free (or Gaussian) fermionic states and operations (with which matchgates coincide). For an overview of these techniques see~\cite{bravyi2005lagrangian,bravyi2017complexity}. We begin by reviewing some identities. For a Majorana operator $\gamma[S]$ with $S\in [2n], |S| = k$ and a generalized matchgate $U(Q)$, with $Q\in \OR(2n)$ we have
\begin{equation}
U(Q)\gamma[S] U(Q)\ct = \sum_{S'\subset[2n], |S'|=k} \det\big(Q[S,S']\big) \gamma[S'],
\end{equation}
where $Q[S,S']$ denotes the matrix $Q$ with only the row indices in $S$ and column indices in $S'$ retained. Moreover we have for a computational basis state $\ket{x}$ and Majorana $\gamma[S]$ that
\begin{equation}\label{eq:wick}
\bra{x}\gamma[S]\ket{x} = \Pf(i M_x[S]),
\end{equation}
with $M[S]:= M[S,S]$, and where $\Pf$ denotes the Pfaffian. The matrix $M_x$ is defined by
\begin{equation}
M_x = \bigoplus_{i=1}^n \begin{pmatrix}0 & (-1)^x_i \\ (-1)^{x_i+1} & 0\end{pmatrix}.
\end{equation}
\Cref{eq:wick} is essentially Wick's theorem, and it extends to more general states.
The Pfaffian has the following three useful properties:
\begin{equation}
\Pf(A)^2 = \sqrt{\det(A)},
\end{equation}
for any even dimensional anti-symmetric matrix $A$,
\begin{equation}
\Pf\begin{pmatrix} M & C\\ -C^T & N
\end{pmatrix} = \Pf(M) \Pf(N  + C M^{-1} C^T),
\end{equation}
for invertible $M$, and
\begin{equation}
\Pf(QMQ^T) = \det(Q)\Pf(M).
\end{equation}
Moreover we will use a more advanced summation identity for Pfaffians, proven in~\cite{ishikawa2006applications}. Given anti-symmetric $2n\times 2n$ matrices $A,B$ and a $2n\times 2n $ matrix $C$ we have the polynomial identity
\begin{equation}\label{eq:gen_func_even}
\Pf(A)\Pf\big(\frac{A^{\mathrm{co}}}{\Pf(A)}+z^2(CBC^T)\big) = \sum_{s=0, s \text{ even}}^{n}z^{s}\sum_{\substack{S,S'\subset [2n]\\|S|=|S'|=s}}\Pf(A[S])\Pf(B[S'])\det(C[S,S']),
\end{equation}
where ${A}^{\mathrm{co}}$ denotes the co-Pfaffian matrix of $A$, which for invertible $A$ is given as ${A}^{\mathrm{co}} = \Pf(A)A^{-T}$. Note that this implies that ${R}^{\mathrm{co}} = \Pf(R) R$ if $R$ is orthogonal.
% We also have for an invertible $2n+2\times 2n+2$ matrix $A$ (indices going from $0$ to $2n+1$), a $2n+1\times 2n+1$ matrix $B$ (indices starting at $0$) and a  $2n\times 2n$ matrix $C$ (indices starting at one), that
% \begin{align}\label{eq:gen_func_odd}
% \Pf(A)\Pf(\frac{A^{co}}{\Pf(A)}+z^2(C')^T) &=
% \sum_{s=0, s \text{ even}}^{n}z^{s}\sum_{\substack{S,S'\subset [2n]\\|S|=|S'|=s}}\Pf(A[S])\Pf(B[S'])\det(C[S,S'])
% \\&\hspace{5em} +
% \sum_{s=0, s \text{ odd}}^{n}z^{s}\sum_{\substack{S,S'\subset [2n]\\|S|=|S'|=s}}\Pf(A[\{0\}\cup S])\Pf(B[\{0\}\cup S'])\det(C[S,S'])
% \end{align}
% where the matrix $C'$ is defined as
% \begin{equation}
% C' = \begin{pmatrix} 0 & z C B[\{0\},[2n]]^T & 0 \\ z B[[2n],\{0\}]^TC  & z^2 CBC^T & 0 \\ 0 & 0 & 0 \end{pmatrix}
% \end{equation}
% where $B[\{0\},[2n]]$ is the zeroth row of $B$ with the zeroth entry removed.
With these identities in hand we move on to compute $\alpha_k(x,Q)$ for even $k$ and $Z$-basis SPAM. From the definition we have
\begin{align*}
  \alpha_k(x, Q) = N_k^{-1} \sum_{\vsp\mathclap{S \subseteq [2n], \lvert S\rvert = k}} \beta_S(x, I) \beta_S(0, Q),
\end{align*}
with $\beta_S(x,Q) = 2^{-n/2} \tr\bigl( \gamma[S] U(Q) E_x U(Q)\ct \bigr)$. Using $E_x = \dens{x}$ we see that
\begin{align*}
  \alpha_k(x, Q) = \frac{2^{-n}}{N_k}\;\;\;\sum_{\vsp\mathclap{S \subseteq [2n], \lvert S\rvert = k}}  \Pf(i M_x[S]) \sum_{\vsp\mathclap{S' \subseteq [2n], \lvert S'\rvert = k}} \Pf(i M_0[S'])\det(Q[S',S]).
\end{align*}
To this we can apply the summation formula \cref{eq:gen_func_even} to conclude that
\begin{equation}
\alpha_k(x, Q)  = -\frac{2^{-n}}{N_k}\frac{1}{k!}\, \partial^k_z \left(\Pf(i M_x)\Pf(i M_x+z^2i QM_0Q^T)\right)\bigg|_{z=0},
\end{equation}
which can easily be evaluated numerically.
Next we calculate the correlator for odd $k$ and $X$-basis SPAM, which is more complicated.
We begin by calculating the quantity
\[
\beta_x^Q[S] = \tr\big(H\tn{n}\dens{x}H\tn{n}U(Q)\gamma[S]U(Q)\ct\big),
\]
for $x\in \{0,1\}^{n}$ and arbitrary $U(Q)\in \mc{M}_n^+$.
We first note that for every $X$ basis state $H\tn{n}\ket{x}$ there exists~$Q_x \in \OR(2n)$ s.t.\ $U(Q_x)H\tn{n}\ket{x} = \frac{I+ i\gamma_{1}}{\sqrt{2}}\ket{0}$. Using this and the determinant expression for the action of $U(Q)$ we get
\[
\beta_x^Q[S] = 2^{-1}\;\;\;\sum_{\mathclap{S' \subseteq [2n], \lvert S'\rvert = k}} \det\big(Q_xQ[S,S']\big)\ket{0}(I + i\gamma_1)\gamma[S'](I+ i\gamma_1)\ket{0} = \sum_{\vsp\mathclap{S' \subseteq [2n], \lvert S'\rvert = k}} \det\big(Q_xQ[S,S']\big)\mc{I}(1\in S')\Pf(i M_0[S'/\{1\}]).
\]
This gives for the correlation function
\begin{equation}
\alpha_k(x, Q)  = \frac{2^{-n}}{N_k}\sum_{\substack{\vsp\mathclap{S \subseteq [2n], \lvert S\rvert = k}\\\mathclap{S' \subseteq [2n], \lvert S'\rvert = k}\\\mathclap{S'' \subseteq [2n], \lvert S''\rvert = k}}}\det\big(Q_x[S,S']\big)\det\big(Q_0Q[S,S']\big)\mc{I}(1\in S')\mc{I}(1\in S'') \Pf(iM_0[S'/\{1\}])\Pf(iM_0[S''/\{1\}]),
\end{equation}
which we can simplify using the Cauchy-Binet identity to
\begin{equation}
\alpha_k(x, Q)  = \frac{2^{-n}}{N_k}\sum_{\substack{\vsp\mathclap{S' \subseteq [2n], \lvert S'\rvert = k}\\\mathclap{S'' \subseteq [2n], \lvert S''\rvert = k}}}\det\big(Q_x^T Q_0 Q[S'',S']\big)\mc{I}(1\in S')\mc{I}(1\in S'') \Pf(iM_0[S'/\{1\}])\Pf(iM_0[S''/\{1\}]).
\end{equation}
% This is close to the entries of the generating function in \cref{eq:gen_func_odd}, but not entirely.
Now note that $\mc{I}(1\in S') \Pf(iM_0[S'/\{1\}])$ is always zero if $2\in S'$ (this follows directly from the definition of $M_0$ and the fact that the Pfaffian is always zero for non-full rank matrices).
% We can remedy this by introducing the matrices
% \begin{equation}
% M_0' = \begin{pmatrix} 0 & e_1  \\-e_1^T & M_0\end{pmatrix} , \;\;\;\;\;\; M_0'' = \begin{pmatrix} M_0' & e_0'^T\\-e_{0} & 0 \end{pmatrix}
% \end{equation}
% where $e_1 = (1, 0, \ldots ,0)$ of length $2n$ and $e_0' = (1, 0, \ldots ,0)$ of length $2n+1$. Both of these matrices are zero-indexed. One can see that for sets $S'\in [2n]$
% \begin{equation}
% \mc{I}(1\in S') \Pf(M_0[S'/\{1\}]) = \Pf(M_0'[S'\cup\{1\}]) = \Pf(M_0''[S'\cup\{1\}]).
% \end{equation}
Hence we can rewrite the correlation function as
\begin{equation}
\alpha_k(x, Q)  = \frac{2^{-n}}{N_k}\sum_{\substack{\vsp\mathclap{S' \subseteq [3:2n], \lvert S'\rvert = k-1}\\\mathclap{S'' \subseteq [3:2n], \lvert S''\rvert = k-1}}}\det\big(Q_x^T Q_0 Q[S'',S']\big)\Pf(iM_0[S'])\Pf(iM_0[S'']).
\end{equation}
Defining the matrices $\widetilde{Q_x^T Q_0 Q} =(Q_x^T Q_0 Q)\big[[3\!\!:\!\!2n]\big] $ and  $\widetilde{M_0} = M_0\big[[3\!\!:\!\!2n]\big]$ we can apply \cref{eq:gen_func_even} and obtain the correlation function as a $k-1$'th derivative of a Pfaffian generating function involving matrices of dimension $2(n-1)$.
\begin{equation}
\alpha_k(x,Q) = \frac{2^{-n}}{N_k}\frac{1}{(k-1)!} \partial_z^{k-1} \Pf\big(i\widetilde{M_0} + z^2 i\widetilde{Q_x^TQ_0 Q}\widetilde{M_0}(\widetilde{Q_x^TQ_0 Q})^T\big)\bigg|_{z=0},
\end{equation}
which allows for direct numerical calculation.

\end{document}